\newtheorem{theorem}{Theorem}[section]
\newtheorem{remark}[theorem]{Remark}
\newtheorem{lemma}[theorem]{Lemma}
\newtheorem{aplemma}{Lemma}[section]
\newtheorem{approp}{Proposition}[section]
\newcommand{\mathbbm}[1]{\mathds{#1}}
\newtheorem{assumption}{Assumption}[section]
\newtheorem{example}{Example}[section]
\def\BibTeX{{\rm B\kern-.05em{\sc i\kern-.025em b}\kern-.08em
    T\kern-.1667em\lower.7ex\hbox{E}\kern-.125emX}}
\begin{document}
\title{Effects of  Jamming Attacks on Wireless Networked Control Systems Under Disturbance} 
\author{Ahmet Cetinkaya, \IEEEmembership{Member, IEEE}, Hideaki Ishii, \IEEEmembership{Fellow, IEEE}, Tomohisa Hayakawa, \IEEEmembership{Member, IEEE}
\thanks{This work is supported by JST ERATO HASUO Metamathematics for Systems Design Project (No.\ JPMJER1603),   by JST CREST Grant  No.\ JPMJCR15K3, and  by JSPS KAKENHI under Grants JP20K14771 and JP18H01460.}
\thanks{Ahmet Cetinkaya is with the National Institute of Informatics, Tokyo, 101-8430, Japan.  (e-mail: {\mbox cetinkaya@nii.ac.jp}).}
\thanks{Hideaki Ishii is with the Department of Computer Science,  Tokyo Institute of Technology, Yokohama  226-8502, Japan. (e-mail: {\mbox ishii@c.titech.ac.jp}).}
\thanks{Tomohisa Hayakawa is with the Department of Systems and Control Engineering,  Tokyo Institute of Technology, Tokyo 152-8552, Japan. (e-mail: {\mbox hayakawa@sc.e.titech.ac.jp}).}}

\maketitle

\begin{abstract}

Jamming attacks on wireless networked control systems are investigated
for the scenarios where the system dynamics face exogenous disturbance.
In particular, the control input packets are assumed to be transmitted
from a controller to a remotely located linear plant over an insecure
wireless communication channel that is subject to jamming attacks.
The time-varying likelihood of transmission failures on this channel
depends on the power of the jamming interference signal emitted by
an attacker. We show that jamming attacks can prevent stability when
the system faces disturbance, even if the attacked system without
disturbance is stable. We also show that stability under jamming and
disturbance can be achieved if the average jamming interference power
is restricted in a certain way that we characterize in the paper.
We illustrate our results on an example networked control system with
a fading wireless channel, where the outage probability is affected
by jamming attacks. \end{abstract} 

\begin{IEEEkeywords} Networked control, cyber-security, wireless networks, jamming interference, disturbance\end{IEEEkeywords}

\section{Introduction}

As the Internet of Things is gaining popularity, the use of wireless
communication channels and the Internet is increasing in remote control
applications. These communication technologies are easy to set up
and they provide efficiency in the transmission of measurement and
control data, but they can create major cyber-security issues in a
networked system \cite{wholejournal2015}. In the framework of cyber-physical
systems, researchers have identified a range of potential cyber attacks
with different properties \cite{chong2019tutorial,dibaji2019systems,lun2019state}.
For instance, an attacker who is knowledgeable about the system dynamics
can disrupt control operation by injecting false data into the system
or altering measurement and control data \cite{mo2010false,fawzi2014secure}.
Attackers with limited information can also cause cyber-security issues
by means of denial-of-service (DoS) attacks to prevent communication
over networks. For example, a jamming attacker can effectively prevent
transmission of packets over wireless channels by emitting sufficiently
strong interference signals, \cite{pelechrinis2011}. Jamming attacks
can result in performance issues and instability in wireless networked
control systems. 

The effects of jamming and other DoS attacks in control systems have
recently been investigated (see \cite{cetinkaya-entropy} for an overview).
In those works various attack models have been considered. For instance,
\cite{shishehsiam2016} considered a model where the attacker conducts
cycles of sleeping and jamming in a repetitive fashion. Moreover,
the works \cite{de2015inputtran,IFACde2016networked,cetinkaya2016tac,feng2017resilient}
considered models that allow the timing of attack strategies to be
arbitrary as long as the average attack duration and the average frequency
of attacks satisfy certain bounds. It was first observed in \cite{de2015inputtran}
that when a control system is subject to disturbance, duration and
frequency conditions for attacks need to be stronger to guarantee
stability in comparison to the case without disturbance. 

In this paper, our goal is to investigate the effects of jamming attacks
specifically for \emph{wireless} networked control problems that are
subject to disturbance. In particular, we consider the control problem
over a wireless channel, where the transmission failure model can
be characterized through the time-dependent Signal-to-Interference-plus-Noise-Ratio
(SINR), which is the ratio of the transmission power of the signal
to the jamming attacker's interference power summed with the channel
noise power. We consider channel models explored in the wireless communications
literature \cite{IFACproakis,molisch2011book}. In those models, the
effect of SINR on transmission failures is described through probabilistic
relations. A jamming signal with a strong interference power results
in a smaller SINR, which ends up increasing the likelihood of a transmission
failure. For instance, in wireless channels with fading, small SINR
increases the so-called outage probability, as explored in \cite{zhu2013networked,sheikholeslami2014jamming,hu2015distributed}.
Jamming can affect practical wireless communication networks. For
instance, effects of jamming attacks on SINR, packet decoding errors,
and failures in IEEE 802.11 communication networks are investigated
through experiments in \cite{fragkiadakis2014denial}. 

Previously, SINR-based channel models were used by \cite{IFACli2015jamming,IFACzhangattack,IFACli2016sinr}
for game-theoretic analysis of remote state estimation problems under
jamming attacks. Moreover, in \cite{IFACzhang2016optimal}, a probabilistic
channel model was considered in a networked control problem setting
and optimal attack policies were explored for the case where the total
number of attacks in a fixed interval is bounded. In \cite{ahmetifacwc2017},
we used an SINR-based probabilistic model to investigate a discrete-time
networked stabilization problem for scenarios where there is no disturbance,
but a jamming attacker can jam the wireless channel at each time instant
with a different interference power level that is unknown a priori.
Our results in \cite{ahmetifacwc2017} indicate that stabilization
can be achieved if the average interference power is bounded in the
long run even if the power can be very large at certain times.

In this paper we consider situations where the jamming attacker can
strategically change the interference power levels  at each time,
as in \cite{ahmetifacwc2017}. However, differently from \cite{ahmetifacwc2017},
we now consider disturbance, and through stochastic analysis, we show
that when the dynamics is subject to disturbance, jamming attacks
can potentially become more dangerous. Our results indicate that a
strategic attacker may take advantage of the disturbance to cause
instability even if the attacked system without disturbance is stable.
Specifically, the attacker can cause the state norm to grow to arbitrarily
large values with arbitrarily high probabilities, while keeping the
average jamming interference power below a threshold in the long run.
Thus, as in the deterministic case discussed in \cite{de2015inputtran},
a restriction is also needed in this paper. We consider a probabilistic
model and the attacker can only partially affect the occurrence probability
of a transmission failure. We show that when jamming attacks are restricted
so that the wireless channel is not subject to long consecutive emissions
of high powered interference signals, then the first moment of the
state stays bounded. Interestingly, even under such restrictions,
the wireless channel may be attacked at all time instants with small
interference powers and thus for any finite interval, there is always
a positive probability that all transmission attempts may fail. In
this aspect, our setting differs from the deterministic case, where
the maximum possible length of a continuous attack duration is required
to be bounded to ensure input-to-state stability under disturbance. 

As a first step, we investigate the scenarios where the norm of the
disturbance is bounded almost surely at each time by a fixed scalar.
In such scenarios, the first moment of the state is bounded under
attacks from an attacker with sufficiently small resources. Then we
explore the more general case where the distribution of the disturbance
norm may have infinite support. For this case, we obtain an inequality
for the first moment of the state that resembles those used for establishing
noise-to-state stability in stochastic systems (e.g., \cite{nunez2014,zhang2016noise}).
In particular, we obtain an upper bound of the first moment of the
state by utilizing the second moment of the disturbance. In our analysis,
a key technical role is played by a nondecreasing and concave function
of the attacker's interference power that upper-bounds the transmission
failure probability. In addition, the use of the first moment of the
state in the analysis facilitates the investigation of cross product
terms that involve the disturbance and the indicator process for transmission
failures through induced matrix norms. A practical consequence is
that our results can be used in the scenarios where the transmission
failures and the disturbance are statistically dependent. This is
for example the case when the disturbance is partially or fully caused
by attacker's actions and the jamming interference in the wireless
channel results in packet content errors. 

The paper is organized as follows. We explain the wireless networked
control problem under jamming attacks in Section~\ref{sec:Networked-Control-Under}.
In Section~\ref{sec:Analysis-of-Networked}, we explain the effects
of jamming attacks on systems with disturbance and present conditions
for stabilization. We present an example in Section~\ref{sec:Illustrative-Numerical-Examples}
to explore the effects of jamming attacks on the outage probability
of a wireless channel and the effects on the networked control system
that utilizes that particular channel. Finally, we conclude the paper
in Section~\ref{sec:Conclusion}. Our preliminary conference report
\cite{ahmetmtns2018} contains some of the results. In this paper,
we provide the proofs, additional detailed discussions, and a new
example. 

Throughout the paper, we use $\mathbb{N}$ and $\mathbb{N}_{0}$ to
denote the sets of positive and nonnegative integers, respectively.
Moreover, $\left\Vert \cdot\right\Vert _{2}$ denotes the Euclidean
norm, $\mathrm{\mathbb{P}}[\cdot]$ and $\mathbb{E}[\cdot]$ respectively
denote the probability and the expectation on a probability space
$(\Omega,\mathcal{F},\mathbb{P})$.  In the presentation of our stability
results, we use induced matrix norms (see Section 5.6 in \cite{hornmatrixanalysis}).
Specifically, for a given matrix $M\in\mathbb{R}^{n\times n}$, we
use $\|M\|$ to denote the induced matrix norm defined by $\|M\|\triangleq\sup_{x\in\mathbb{R}^{n}\setminus\{0\}}\frac{\|Mx\|}{\|x\|}$,
where $\|\cdot\|$ on the right-hand side denotes a vector norm on
$\mathbb{R}^{n}$. 

\section{Networked Control Under Jamming Attacks }

\label{sec:Networked-Control-Under}

We consider the networked control problem of a discrete-time linear
plant with a static state feedback controller. As illustrated in Fig.~\ref{Flo:NetworkedOperation},
a wireless communication channel is used for transmission of control
command  packets from the controller to the plant. This channel is
subject to transmission failures at certain times due to interference
caused by the jamming signal of an attacker, where the strength (or
power) can be tunable by the attacker. 

\begin{figure}
\centering  \includegraphics[width=0.8\columnwidth]{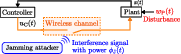} 

\caption{Operation of networked control system under jamming attacks }
 \label{Flo:NetworkedOperation} 
\end{figure}

In the networked control operation, at each time step $t$, the controller
computes a control command using the state information and attempts
to transmit it on the wireless channel. If the transmission is successful,
then the transmitted control command is applied at the plant side.
If, on the other hand, there is a transmission failure, then the control
input at the plant side is set to $0$. In this setting, the dynamics
of the plant is given by
\begin{align}
x(t+1) & =Ax(t)+(1-l(t))Bu_{\mathrm{C}}(t)+w_{\mathrm{P}}(t),\label{eq:system}
\end{align}
 where $x(t)\in\mathbb{R}^{n}$ is the state, $u_{\mathrm{C}}(t)\in\mathbb{R}^{m}$
is the control command  that is attempted to be transmitted by the
controller to the plant at time $t$, $w_{\mathrm{P}}(t)\in\mathbb{R}^{n}$
is the disturbance, and $l(t)\in\{0,1\}$ represents the transmission
status (with $l(t)=1$ indicating failure and $l(t)=0$ indicating
success). Moreover, $A\in\mathbb{R}^{n\times n}$ is the unstable
system matrix and $B^{n\times m}$ is the input matrix. 

In this paper, we investigate the networked stabilization of the plant
(\ref{eq:system}) through a state-feedback controller, where the
control command transmitted by the controller is given by 
\begin{align}
u_{\mathrm{C}}(t) & =Kx(t)+w_{\mathrm{C}}(t),\quad t\in\mathbb{N}_{0},\label{eq:control-input}
\end{align}
where $K\in\mathbb{R}^{m\times n}$ denotes the feedback gain, and
$w_{\mathrm{C}}(t)\in\mathbb{R}^{m}$ is used for describing disturbances
on the control command. 

\subsection{Closed-Loop System Dynamics}

With $w(t)\triangleq w_{\mathrm{P}}(t)+(1-l(t))Bw_{\mathrm{C}}(t)$,
the closed-loop networked control system (\ref{eq:system}), (\ref{eq:control-input})
becomes
\begin{align}
x(t+1)=Ax(t)+(1-l(t))BKx(t)+w(t),\,\,t\in\mathbb{N}_{0}.\label{eq:closed-loop-system}
\end{align}
The vector $w(t)$ in (\ref{eq:closed-loop-system}) represents the
\emph{overall disturbance} in the control system dynamics and it is
not related to the jamming signal emitted by the attacker. In our
problem setting, the jamming action affects the probability of successful/failed
delivery of control commands (as we will explain below more precisely).
In this sense, our problem setting is similar to those in \cite{de2015inputtran,IFACli2016sinr,IFACli2015jamming},
which involve DoS attacks causing packet losses. We note that there
are other problem settings in the literature, where the notion of
``jamming'' is used for describing the noise on the transmitted
data (see, e.g., \cite{basar1983gaussian,basar1985complete,akyol2015optimal});
there, the received data is the sum of the original data and the jamming
noise. In contrast, in our work, when there is a successful delivery,
the control command $u_{\mathrm{C}}(t)$, which is transmitted from
the controller, is assumed to be received by the plant (and applied
as an input) without any change. 

\begin{remark}In this paper, we present our results in terms of the
overall disturbance $w(t)$ in (\ref{eq:closed-loop-system}), which
includes exogenous disturbances on the plant modeled with $w_{\mathrm{P}}(t)$,
as well as potentially network-related disturbances on the controller
modeled with $w_{\mathrm{C}}(t)$. For the scenarios where the state
measurement is noisy, the effects of noise can also be represented
through the process $w_{\mathrm{C}}(t)$. In such cases, the control
command is given by $K\tilde{x}(t)$, where $\tilde{x}(t)=x(t)+\eta(t)$
is the measured state and $\eta(t)\in\mathbb{R}^{n}$ represents the
measurement noise. This situation is represented through (\ref{eq:control-input})
by setting $w_{\mathrm{C}}(t)\triangleq K\eta(t)$. We also note that
in our analysis, $w(t)$ is considered as a stochastic process. However,
we do not assume to know its distribution. \hfill \hfill $\triangleleft$
\end{remark}

\begin{remark}While we derived the closed-loop system (\ref{eq:closed-loop-system})
using a static state-feedback controller, the form of the dynamics
 in (\ref{eq:closed-loop-system}) also allows representing closed-loop
systems under other control architectures. For instance, one can consider
a dynamic controller 
\begin{align*}
x_{\mathrm{C}}(t+1) & =A_{\mathrm{C}}x_{\mathrm{C}}(t)+B_{\mathrm{C}}x(t),\\
u_{\mathrm{C}}(t) & =C_{\mathrm{C}}x_{\mathrm{C}}(t)+D_{\mathrm{C}}x(t)+w_{\mathrm{C}}(t),
\end{align*}
 where $x(t)$ is the state of plant (\ref{eq:system}), $x_{\mathrm{C}}(t)\in\mathbb{R}^{n_{\mathrm{C}}}$
is the internal state of the controller, $u_{\mathrm{C}}(t)\in\mathbb{R}^{m}$
is the control command transmitted from the controller, and $A_{\mathrm{C}}$,
$B_{\mathrm{C}}$, $C_{\mathrm{C}}$, $D_{\mathrm{C}}$ are matrices
that characterize the controller's dynamics. The closed-loop system
under this dynamic controller can be described by an equation similar
to (\ref{eq:closed-loop-system}). Specifically, by setting $\overline{x}(t)=[x^{\mathrm{T}}(t),x_{\mathrm{C}}^{\mathrm{T}}(t)]^{\mathrm{T}}$,
we have 
\begin{align}
\overline{x}(t+1)=\overline{A}\overline{x}(t)+(1-l(t))\overline{B}\overline{K}\overline{x}(t)+\overline{w}(t),\label{eq:overline-dynamics}
\end{align}
 where $\overline{w}(t)=[w_{\mathrm{P}}^{\mathrm{T}}(t)+(1-l(t))(Bw_{\mathrm{C}}(t))^{\mathrm{T}},0_{1\times n_{\mathrm{C}}}]^{\mathrm{T}}$
and
\begin{align*}
\overline{A} & =\left[\begin{array}{cc}
A & 0_{n\times n_{\mathrm{C}}}\\
B_{\mathrm{C}} & A_{\mathrm{C}}
\end{array}\right],\,\overline{B}=\left[\begin{array}{c}
B\\
0_{n_{\mathrm{C}}\times m}
\end{array}\right],\,\overline{K}=\left[\begin{array}{cc}
D_{\mathrm{C}} & C_{\mathrm{C}}\end{array}\right].
\end{align*}
 Output-feedback controllers can also be described similarly. We note
that dynamic controllers are shown to be advantageous in anytime-control
frameworks and soft real-time control systems \cite{greco2010design}.

Equation (\ref{eq:system}) represents the setting where the input
of the plant is set to $0$ whenever there is a transmission failure.
Similarly, we can consider the setting where the plant uses the previous
input value if there is a failure. In that case, the plant dynamics
is given by 
\begin{align*}
x(t+1) & =Ax(t)+Bv(t)l(t)+(1-l(t))Bu_{\mathrm{C}}(t)+w_{\mathrm{P}}(t),\\
v(t+1) & =v(t)l(t)+(1-l(t))u_{\mathrm{C}}(t),
\end{align*}
 where $v(t)\in\mathbb{R}^{m}$ represents the last control command
that was successfully transmitted from the controller. In the case
of the state-feedback controller (\ref{eq:control-input}), we can
let $\overline{x}(t)=[x^{\mathrm{T}}(t),v^{\mathrm{T}}(t)]^{\mathrm{T}}$
and describe the dynamics of the closed-loop system using (\ref{eq:overline-dynamics}),
where 
\begin{align*}
\overline{A} & =\left[\begin{array}{cc}
A & B\\
0_{m\times n} & I_{m}
\end{array}\right],\,\overline{B}=\left[\begin{array}{c}
B\\
I_{m}
\end{array}\right],\,\overline{K}=\left[\begin{array}{cc}
K & -I_{m}\end{array}\right],
\end{align*}
 with $I_{m}$ denoting the identity matrix in $\mathbb{R}^{m\times m}$
and $\overline{w}(t)=[w_{\mathrm{P}}^{\mathrm{T}}(t)+(1-l(t))(Bw_{\mathrm{C}}(t))^{\mathrm{T}},(1-l(t))w_{\mathrm{C}}^{\mathrm{T}}(t)]^{\mathrm{T}}$.
\hfill \hfill $\triangleleft$

\end{remark}

\subsection{Transmission Failure Model}

In our problem setting, the likelihood of a transmission failure depends
on the \emph{power }of the jamming interference. If the interference
power is large, then a transmission failure may likely occur. In particular,
with $\phi_{\mathrm{J}}(t)\in[0,\infty)$ denoting the interference
power at time $t$, the transmission failure indicator $l(t)$ in
(\ref{eq:system}) is given by 
\begin{align}
l(t) & \triangleq\mathbbm{1}[r(t)\leq p(\phi_{\mathrm{J}}(t))],\quad t\in\mathbb{N}_{0},\label{eq:ldef}
\end{align}
where $p\colon[0,\infty)\to[0,1]$ is a Borel-measurable, nondecreasing
function, and $r(0),r(1),\ldots$ are independent random variables
that are distributed uniformly in $[0,1]$. Furthermore $\{r(t)\in[0,1]\}_{t\in\mathbb{N}_{0}}$
and $\{\phi_{\mathrm{J}}(t)\in[0,\infty)\}_{t\in\mathbb{N}_{0}}$
are assumed to be mutually independent processes. Notice that for
a fixed scalar $\phi$, we represent by $p(\phi)$ the conditional
probability of a transmission failure given that the jamming interference
power is set to $\phi$. In particular, (\ref{eq:ldef}) implies 
\begin{align*}
\mathbb{P}[l(t)=1|\phi_{\mathrm{J}}(t)=\phi] & =\mathbb{P}[r(t)\leq p(\phi)|\phi_{\mathrm{J}}(t)=\phi]\\
 & =\mathbb{P}[r(t)\leq p(\phi)]=p(\phi).
\end{align*}
Observe that, if $\phi_{\mathrm{J}}(t)$ is large so that $p(\phi_{\mathrm{J}}(t))$
is close to $1$, then  it becomes more likely that $r(t)\leq p(\phi_{\mathrm{J}}(t))$,
and hence by (\ref{eq:ldef}), a transmission failure is likely to
occur. We note that the attacker controls the \emph{power level} $\phi_{\mathrm{J}}(t)$
of jamming signals, but \emph{not} the jamming signals themselves.

Note also that transmission failures at different times are \emph{conditionally
independent} given the interference powers at those times. Namely,
for every $t_{1}<t_{2}<\cdots<t_{k}$, $k\in\mathbb{N}$, 
\begin{align*}
 & \mathbb{P}[l(t_{1})=1,\ldots,l(t_{k})=1|\phi_{\mathrm{J}}(t_{1})=\phi_{1},\ldots,\phi_{\mathrm{J}}(t_{k})=\phi_{k}]\\
 & \quad=\prod_{i=1}^{k}\mathbb{P}[l(t_{i})=1|\phi_{\mathrm{J}}(t_{i})=\phi_{i}]=\prod_{i=1}^{k}p(\phi_{i}).
\end{align*}

The characterization in (\ref{eq:ldef}) enables us to describe security
properties of different wireless channel models, as illustrated below. 

\begin{example}[Outage probability] \label{ExampleOutage} The function
$p$ can be used for describing the\emph{ outage probability} in wireless
channels with fading. Outage occurs when the SINR at the receiver
side (the plant in this paper) goes below a threshold  due to fading
(see Section 14.2 in \cite{IFACproakis} and Section 12.2.3 in \cite{molisch2011book}).
Outage probability has been used in different problem settings that
involve jamming attacks \cite{zhu2013networked,sheikholeslami2014jamming}.
Here we present two examples. First, in the case of a Rayleigh-fading
channel considered in \cite{sheikholeslami2014jamming}, the outage
probability is given by 
\begin{align}
p(\phi_{\mathrm{J}}) & =1-\frac{e^{-\underline{\gamma}\sigma/(b_{2}\xi)}}{1+\underline{\gamma}(b_{1}\phi_{\mathrm{J}})/(b_{2}\xi)},\label{eq:p-outage}
\end{align}
 where $\xi\in(0,\infty)$ and $\sigma\in(0,\infty)$ are constants
associated respectively with the transmission power and the power
of the channel noise. The scalars $b_{1},b_{2}\in(0,\infty)$ depend
on the distances of the jamming attacker and the controller from the
plant. They are constant in our setup, since the geographical locations
of the jamming attacker, the controller, and the plant are fixed.
The scalar $\underline{\gamma}$ represents the SINR-threshold. As
the second example, we can also investigate the approximate outage
probability considered in \cite{zhu2013networked,sheikholeslami2014jamming,hu2015distributed}
by setting 
\begin{align}
p(\phi_{\mathrm{J}}) & =1-e^{-\underline{\gamma}/\gamma},\label{eq:p-outage-approximate}
\end{align}
 where $\gamma=\frac{b_{2}\xi}{b_{1}\phi_{\mathrm{J}}+\sigma}$ is
the SINR and $\underline{\gamma}$ is its threshold for outage. The
scenarios in \cite{zhu2013networked,hu2015distributed} involve moving
transmitters and interference sources. Our setup is closer to \cite{sheikholeslami2014jamming}
in that the jamming attacker is not mobile, but capable of changing
the power of emitted interference. \hfill $\triangleleft$\end{example}
\medskip

\begin{example} \label{ExamplePacketLength} Additive white Gaussian
noise channel models considered in \cite{IFACli2016sinr,IFACzhangattack}
can be represented by appropriately choosing $p$. For instance, a
special case of the model in \cite{IFACzhangattack} with fixed channel
gains can be represented with 
\begin{align}
p(\phi_{\mathrm{J}}) & =1-\big(1-Q(\sqrt{2\xi/(\phi_{\mathrm{J}}+\sigma)})\big)^{L},\label{eq:p-with-packet-length}
\end{align}
where $Q(y)\triangleq\frac{1}{\sqrt{2\pi}}\int_{y}^{\infty}e^{-\frac{s^{2}}{2}}\mathrm{d}s$,
$L\in\mathbb{N}$ denotes the length of packet being transmitted,
and the positive constants $\xi$ and $\sigma$ respectively denote
the transmission and the channel noise powers. \hfill $\triangleleft$
\end{example} \medskip

The transmission failure probability function $p$ in (\ref{eq:ldef})
plays an important role in the analysis presented in the next section.
In particular, if $p$ is a concave function and the average power
of jamming interference is upper-bounded by a scalar $\overline{\phi}_{\mathrm{J}}$
(as we explain later), then $p(\overline{\phi}_{\mathrm{J}})$ can
be used in the stability analysis as an upper bound on the long-run
average number of transmission failures (i.e., $\limsup_{t\to\infty}\frac{1}{t}\sum_{i=0}^{t-1}l(i)\leq p(\overline{\phi}_{\mathrm{J}})$,
almost surely). If $p$ is not concave, then a concave function that
upper-bounds $p$ can be used for the same purpose. To this end, in
this paper we use a continuous, nondecreasing, and concave function
$\hat{p}\colon[0,\infty)\to[0,1]$ such that 
\begin{align}
\hat{p}(\phi) & \geq p(\phi),\quad\phi\in[0,\infty).\label{eq:p_phat_ineq}
\end{align}
 Notice that such a function $\hat{p}$ always exists. The work \cite{ahmetifacwc2017}
discusses methods of finding tight concave upper-bounding functions
$\hat{p}$. Furthermore, in the case of the transmission failure probability
functions in (\ref{eq:p-outage}) and (\ref{eq:p-outage-approximate})
from Example~\ref{ExampleOutage}, it suffices to choose $\hat{p}$
same as $p$, since in both cases $p$ is continuous, nondecreasing,
and concave (by having a nonpositive second derivative).

\section{Analysis of Networked Stabilization}

\label{sec:Analysis-of-Networked}

In this section, we first provide a quick look at the stability of
networked control system (\ref{eq:closed-loop-system}) in the disturbance-free
case. Then we discuss how a strategic jamming attacker can take advantage
of the presence of disturbance to prevent stabilization. Finally,
we obtain conditions of stability under disturbance. 

\subsection{Stabilization in the Disturbance-Free Case}

A networked control system under jamming attacks but without disturbance
($w(t)=0$, $t\in\mathbb{N}_{0}$) was studied in \cite{ahmetifacwc2017}.
There, it was noted that emitting jamming interference signals is
a costly action due to its large energy requirements \cite{pelechrinis2011}.
The following assumption on the attacker's interference power was
considered in that work as a natural way to describe the energy constraints
of an attacker. 

\begin{assumption} \label{FirstAssumption} There exist scalars $\overline{\kappa}\geq0$,
$\overline{\phi}_{\mathrm{J}}\geq0$ such that 
\begin{align}
\mathbb{P}\big[\sum_{i=0}^{t-1}\phi_{\mathrm{J}}(i)\leq\overline{\kappa}+\overline{\phi}_{\mathrm{J}}t\big] & =1,\quad t\in\mathbb{N}.\label{eq:attack_assumption_one}
\end{align}
\end{assumption} \medskip

Here, the scalar $\overline{\kappa}$ models the attacker's initial
capabilities. Large $\overline{\kappa}$ values describe attackers
with large initial energy resources capable of setting $\phi_{\mathrm{J}}(t)$
to large values for a few initial time instants. On the other hand,
$\overline{\phi}_{\mathrm{J}}$ is an upper bound on the \emph{long-run}
average interference power (i.e., $\limsup_{k\to\infty}\frac{1}{k}\sum_{t=0}^{k-1}\phi_{\mathrm{J}}(t)\leq\overline{\phi}_{\mathrm{J}}$)
describing the overall attack strength. The scalar $\overline{\phi}_{\mathrm{J}}$
is typically strictly smaller than the maximum possible power of the
interference that can be physically emitted from the attacker. However,
by waiting sufficiently long without attacking, the attacker can preserve
energy and emit strong interference signals with power levels larger
than $\overline{\phi}_{\mathrm{J}}$ for certain durations while still
satisfying (\ref{eq:attack_assumption_one}). 

The analysis in \cite{ahmetifacwc2017} indicates that if the long-run
average power bound $\overline{\phi}_{\mathrm{J}}$ is sufficiently
small, then the closed-loop system (\ref{eq:closed-loop-system})
is asymptotically stable almost surely, implying $\mathbb{P}[\lim_{t\to\infty}\|x(t)\|_{2}=0]=1$.
A similar conclusion is drawn in Proposition A.1 
 in the Appendix
for the first-moment asymptotic stability, implying  $\lim_{t\to\infty}\mathbb{E}[\|x(t)\|_{2}]=0$. 

\subsection{Effects of Jamming Attacks on Systems Under Disturbance}

\label{subsec:Joint-Effect-of}

For certain systems that face disturbance,  jamming attacks can become
more dangerous. Even if the disturbance is very small and the attacker
has very limited resources, there still exist attack strategies that
can \emph{destabilize} the system while satisfying Assumption~\ref{FirstAssumption}
with very small $\overline{\phi}_{\mathrm{J}}$. We illustrate this
idea in the following example.

\begin{example} \label{ExampleStrategy} Consider a scalar networked
control system (\ref{eq:system}), (\ref{eq:control-input}) with
$x_{0}>0$, $A+BK\in[0,1)$, $A>1$, and a constant disturbance $w(t)=w^{*}>0$,
$t\in\mathbb{N}_{0}$, as a dynamic effect that is unrelated to jamming.
Suppose that the conditional probability $p$ of transmission failures
is a strictly increasing function (e.g., $p$ given by the outage
probability (\ref{eq:p-outage})). For this setup, an attacker can
wait sufficiently long and then attack for a duration with a sufficiently
large interference power level  so that the state norm grows to large
values but the average interference power does not go above $\overline{\phi}_{\mathrm{J}}$.
In particular, for any $\overline{\phi}_{\mathrm{J}}>0$, $x_{0}>0$,
$z>0$, and $\rho\in(0,1)$, the attack strategy
\begin{align}
\phi_{\mathrm{J}}(t) & \triangleq\begin{cases}
\phi_{\mathrm{J}}^{*}, & t\in\{\tau_{1},\ldots,\tau_{1}+\tau_{2}-1\},\\
0, & \mathrm{otherwise},
\end{cases}\label{eq:ExampleStrategyV}
\end{align}
 with $\phi_{\mathrm{J}}^{*}\triangleq p^{-1}(\rho^{\frac{1}{\tau_{2}}})+1$,
$\tau_{1}\triangleq\lfloor\frac{\max\{\phi_{\mathrm{J}}^{*}-\overline{\phi}_{\mathrm{J}},0\}\tau_{2}}{\overline{\phi}_{\mathrm{J}}}\rfloor+1$,
$\tau_{2}\triangleq\lfloor\max\{\log_{A}(z/w^{*}),0\}\rfloor+1$ guarantees
that Assumption~\ref{FirstAssumption} is satisfied and the state
exceeds the value $z$ with probability larger than $\rho$ at time
$\tau\triangleq\tau_{1}+\tau_{2}$, i.e., $\mathbb{P}[x(\tau)>z]>\rho.$ 

To show this, we consider
\begin{align*}
E(\tau_{1},\tau_{2}) & \triangleq\big\{\omega\in\Omega\colon l(t)=1,t\in\{\tau_{1},\ldots,\tau_{1}+\tau_{2}-1\}\big\}\in\mathcal{F},
\end{align*}
which represents the case where all packet transmissions during $t\in\{\tau_{1},\ldots,\tau_{1}+\tau_{2}-1\}$
fail. By (\ref{eq:ExampleStrategyV}), we have $\mathbb{P}[E(\tau_{1},\tau_{2})]=p^{\tau_{2}}(\phi_{\mathrm{J}}^{*})$.
Now, since $x_{0}>0$, $A>1$, and $w^{*}>0$, we obtain $x(t)\geq w^{*},$
$t\in\mathbb{N}$. Therefore, 
\begin{align*}
 & \mathbb{P}[x(\tau)>z]\geq\mathbb{P}[x(\tau)>z\,|\,E(\tau_{1},\tau_{2})]\mathbb{P}[E(\tau_{1},\tau_{2})]\\
 & \quad\geq\mathbb{P}[A^{\tau_{2}}x(\tau_{1})+\sum_{i=0}^{\tau_{2}-1}A^{i}w^{*}>z\,|\,E(\tau_{1},\tau_{2})]p^{\tau_{2}}(\phi_{\mathrm{J}}^{*})\\
 & \quad\geq\mathbb{P}[A^{\tau_{2}}w^{*}>z\,|\,E(\tau_{1},\tau_{2})]p^{\tau_{2}}(\phi_{\mathrm{J}}^{*})>1\cdot\rho^{\frac{\tau_{2}}{\tau_{2}}}=\rho.
\end{align*}
Furthermore, the attack strategy (\ref{eq:ExampleStrategyV}) satisfies
Assumption~\ref{FirstAssumption} with $\overline{\kappa}=0$, because
$\tau_{1}\geq\max\{\phi_{\mathrm{J}}^{*}-\overline{\phi}_{\mathrm{J}},0\}\tau_{2}/\overline{\phi}_{\mathrm{J}}\geq(\phi_{\mathrm{J}}^{*}-\overline{\phi}_{\mathrm{J}})\tau_{2}/\overline{\phi}_{\mathrm{J}}$,
and thus, $\sum_{i=0}^{\tau-1}\phi_{\mathrm{J}}(i)=\phi_{\mathrm{J}}^{*}\tau_{2}\leq\overline{\phi}_{\mathrm{J}}(\tau_{1}+\tau_{2})=\overline{\phi}_{\mathrm{J}}\tau$.
\hfill $\triangleleft$ \end{example}

The attack strategy (\ref{eq:ExampleStrategyV}) can make the state
grow arbitrarily large even if the interference power bound $\overline{\phi}_{\mathrm{J}}$
is small. This attack strategy is effective, because even if the attacker
initially waits for a long duration without attacking, the state never
reaches a small neighborhood of zero due to the disturbance. Hence,
after waiting for a while, the attacker can consecutively attack with
high interference powers to cause many transmission failures and make
the state norm grow to large values. This is further illustrated in
Section~\ref{sec:Illustrative-Numerical-Examples}. 

\begin{remark}Attack strategies similar to the one discussed in Example~\ref{ExampleStrategy}
may not always be able to leverage the existence of disturbance to
cause instability, if the disturbance only affects a stable mode of
the system that does not get influenced by jamming-related transmission
failures. For instance, consider (\ref{eq:system}), (\ref{eq:control-input})
where
\begin{align}
 & A=\left[\begin{array}{cc}
0.5 & 0\\
0 & 2
\end{array}\right],\,B=\left[\begin{array}{c}
0\\
1
\end{array}\right],\,K=\left[\begin{array}{cc}
0 & -1.5\end{array}\right].\label{eq:abkstableunstable}
\end{align}
In this case, consider a disturbance process $w(t)=[w_{1}(t),0]^{\mathrm{T}}$.
Here, the disturbance only affects the first state and the jamming
attacks only affect the second state. The first state remains bounded
under bounded disturbances regardless of jamming. On the other hand,
sufficiently frequent transmission failures due to jamming attacks
can cause the second state to diverge. However, differently from Example~\ref{ExampleStrategy},
the attacker needs to spend considerably more resources to cause instability,
because disturbance and jamming affect different parts of the dynamics.
In the following sections, we provide conditions of stabilization
by considering the worst-case scenarios as in Example~\ref{ExampleStrategy}.
Direct application of such conditions can be conservative for systems
similar to (\ref{eq:abkstableunstable}). However, in some cases,
conservativeness can be reduced. For instance, if a system has components
that are influenced by both the jamming and the disturbance, partitioning
the system to apply the results to only those components can help
reduce conservativeness. \hfill $\triangleleft$ \end{remark} 

\subsection{Jamming Interference and Bounded Disturbance}

To ensure stability under both disturbance and jamming, the attacks
need to be restricted in a way that high jamming interference powers
at consecutive times are not allowed. To this end, we consider the
following assumption. 

\begin{assumption} \label{SecondAssumption} There exist scalars
$\hat{\kappa}\geq0$ and $\hat{\phi}_{\mathrm{J}}\geq0$ such that
\begin{align}
\mathbb{P}\big[\sum_{i=t_{1}}^{t_{2}-1}\phi_{\mathrm{J}}(i)\leq\hat{\kappa}+\hat{\phi}_{\mathrm{J}}(t_{2}-t_{1})\big] & =1,\label{eq:attack_assumption_two}
\end{align}
 for all $t_{1},t_{2}\in\mathbb{N}_{0}$ with $t_{1}<t_{2}$. \end{assumption}

Notice that (\ref{eq:attack_assumption_two}) implies (\ref{eq:attack_assumption_one})
(with $\overline{\kappa}=\hat{\kappa}$ and $\overline{\phi}_{\mathrm{J}}=\hat{\phi}_{\mathrm{J}}$),
but the converse is not true. Assumption~\ref{SecondAssumption}
is thus more restrictive than Assumption~\ref{FirstAssumption}.
In particular, under Assumption~\ref{SecondAssumption}, the attacker
can attack with a jamming interference power $\phi_{\mathrm{J}}^{*}>\hat{\phi}_{\mathrm{J}}$
consecutively for at most $\lfloor\hat{\kappa}/(\phi_{\mathrm{J}}^{*}-\hat{\phi}_{\mathrm{J}})\rfloor$
time steps; hence, the destabilizing attacks discussed in Example~\ref{ExampleStrategy}
are avoided. Furthermore, setting $\lfloor\hat{\kappa}/(\phi_{\mathrm{J}}^{\max}-\hat{\phi}_{\mathrm{J}})\rfloor=1$
and solving for $\phi_{\mathrm{J}}^{\max}$ provide us a hard constraint
($\phi_{\mathrm{J}}(t)\leq\phi_{\mathrm{J}}^{\max}$ for every $t\in\mathbb{N}_{0}$)
on the interference power level. The value $\phi_{\mathrm{J}}^{\max}$
is related to the physical limits of interference generation in wireless
jamming units. 

Assumption~\ref{SecondAssumption} is related to other characterizations
of attacks. In particular, in the continuous-time deterministic DoS
attack characterization of \cite{de2015inputtran}, the number of
attacks in a given time frame as well as the total duration of those
attacks are bounded by certain ratios of the length of that time frame.
Under that characterization, the maximum possible length of a continuous
attack duration is bounded, which enables analysis of input-to-state
stability under disturbance. The restriction on jamming through Assumption~\ref{SecondAssumption}
is similar, since long consecutive emissions of high-powered  interference
signals are not allowed. We note, however, that Assumption~\ref{SecondAssumption}
does allow the case where the channel is attacked at all times if
the attacker's interference power for certain times is small. 

In this section, we investigate the networked control system (\ref{eq:closed-loop-system})
under bounded disturbances. The analysis is then extended in Section~\ref{subsec:Stabilization-Under-Jamming}
to the case where the disturbance has finite second moments but its
norm may not be bounded by a fixed scalar. 

In this paper, we consider scenarios where the norm of the disturbance
does not approach zero, and hence the state or its moments may not
converge to the origin. Instead of exploring asymptotic stability,
our goal is to obtain conditions for the first moment of the state
to stay bounded. To this end, let $\hat{A}(t)\triangleq l(t)A+(1-l(t))(A+BK)$,
$t\in\mathbb{N}_{0}$, and moreover, for every $t_{1},t_{2}\in\mathbb{N}_{0}$
with $t_{1}\leq t_{2},$ let 
\begin{align*}
F(t_{2},t_{1}) & \triangleq\begin{cases}
\hat{A}(t_{2}), & t_{1}=t_{2},\\
\hat{A}(t_{2})\cdots\hat{A}(t_{1}), & t_{1}<t_{2}.
\end{cases}
\end{align*}
For the closed-loop system (\ref{eq:closed-loop-system}), we have
$x(t)=F(t-1,0)x_{0}+\sum_{j=0}^{t-2}F(t-1,j+1)w(j)+w(t-1)$, for $t\in\mathbb{N}$.
Therefore, for any induced norm $\|\cdot\|$, it follows from the
triangle inequality and the submultiplicativity property that $\|x(t)\|\leq\Big(\prod_{i=0}^{t-1}\|\hat{A}(i)\|\Big)\|x_{0}\|+\sum_{j=0}^{t-2}\Big(\prod_{i=j+1}^{t-1}\|\hat{A}(i)\|\Big)\|w(j)\|+\|w(t-1)\|.$
Here, we have $\|\hat{A}(i)\|=l(i)\|A\|+(1-l(i))\|A+BK\|$, $i\in\mathbb{N}_{0}$.
Hence, by letting 
\begin{align}
\zeta_{1} & \triangleq\|A\|-\|A+BK\|,\quad\zeta_{0}\triangleq\|A+BK\|,\label{eq:zeta-def}
\end{align}
 we obtain for $t\in\mathbb{N}$, $\|x(t)\|\leq\Big(\prod_{i=0}^{t-1}(\zeta_{1}l(i)+\zeta_{0})\Big)\|x_{0}\|+\sum_{j=0}^{t-2}\Big(\prod_{i=j+1}^{t-1}(\zeta_{1}l(i)+\zeta_{0})\Big)\|w(j)\|+\|w(t-1)\|$.
By using this inequality, we can also obtain an upper bound of the
Euclidean norm of the state. Specifically, by Corollary~5.4.5 of
\cite{hornmatrixanalysis}, there exist $c_{1}>0$ and $c_{2}>c_{1}$
such that 
\begin{align}
c_{1}\|y\| & \leq\|y\|_{2}\leq c_{2}\|y\|,\quad y\in\mathbb{R}^{n}.\label{eq:norm-equivalence}
\end{align}
Therefore, we have
\begin{align}
\|x(t)\|_{2} & \leq\frac{c_{2}}{c_{1}}\Bigg(\Big(\prod_{i=0}^{t-1}(\zeta_{1}l(i)+\zeta_{0})\Big)\|x_{0}\|_{2}\nonumber \\
 & \quad+\sum_{j=0}^{t-2}\Big(\prod_{i=j+1}^{t-1}(\zeta_{1}l(i)+\zeta_{0})\Big)\|w(j)\|_{2}+\|w(t-1)\|_{2}\Bigg).\label{eq:two-norm-ineq}
\end{align}
Notice here that the particular values of $c_{1}$ and $c_{2}$ depend
on the choice of the vector norm that induces the matrix norm $\|\cdot\|$. 

We use (\ref{eq:two-norm-ineq}) to provide bounds on the first moment
$\mathbb{E}[\|x(t)\|_{2}]$. First, in the following result, we consider
the case where the disturbance is \emph{bounded} and the jamming attacks
satisfy Assumption~\ref{SecondAssumption}. 

\begin{theorem} \label{TheoremBoundedDisturbance} Consider the closed-loop
networked control system (\ref{eq:closed-loop-system}). Suppose that
the attacker's interference power process $\{\phi_{\mathrm{J}}(t)\in[0,\infty)\}_{t\in\mathbb{N}_{0}}$
satisfies Assumption~\ref{SecondAssumption}. Furthermore, suppose
that there exists $\overline{w}\geq0$ such that 
\begin{align}
\mathbb{P}[\|w(t)\|_{2}\leq\overline{w}] & =1,\quad t\in\mathbb{N}.\label{eq:almost_sure_noise_bound}
\end{align}
If 
\begin{align}
 & (1-\hat{p}(\hat{\phi}_{\mathrm{J}}))\|A+BK\|+\hat{p}(\hat{\phi}_{\mathrm{J}})\|A\|<1,\label{eq:abkcond-vhat}
\end{align}
 then there exist $\hat{\mu}\geq0$, $\hat{\theta}\in(0,1)$, and
$\hat{d}\geq0$ such that 
\begin{align}
\mathbb{E}[\|x(t)\|_{2}] & \leq\hat{\mu}\hat{\theta}^{t}\|x_{0}\|_{2}+\hat{d}\overline{w},\quad t\in\mathbb{N}.\label{eq:noise_to_state_one}
\end{align}

\end{theorem}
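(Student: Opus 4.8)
The plan is to take expectations in the pathwise estimate (\ref{eq:two-norm-ineq}) and reduce the entire argument to a single geometric bound on the expected products of the random factors $\zeta_1 l(i)+\zeta_0$. Concretely, the core lemma I would establish first is that, under Assumption~\ref{SecondAssumption} and condition (\ref{eq:abkcond-vhat}), there exist $\hat\mu\geq 1$ and $\hat\theta\in(0,1)$ such that
\begin{align*}
\mathbb{E}\Big[\prod_{i=t_1}^{t_2-1}(\zeta_1 l(i)+\zeta_0)\Big] & \leq\hat\mu\hat\theta^{\,t_2-t_1},\quad t_1,t_2\in\mathbb{N}_0,\ t_1<t_2,
\end{align*}
with the \emph{same} $\hat\mu,\hat\theta$ for every window $[t_1,t_2)$. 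This window-uniformity is exactly why Assumption~\ref{SecondAssumption} rather than Assumption~\ref{FirstAssumption} is required, since the cross terms in (\ref{eq:two-norm-ineq}) involve products that start at an arbitrary index $j+1>0$.

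To prove the core lemma I would condition on the interference powers $\{v(i)\}$. Because the transmission failures are conditionally independent given the $v(i)$ with $\mathbb{E}[l(i)\mid v(i)]=p(v(i))$, the conditional expectation factors as $\prod_{i=t_1}^{t_2-1}(\zeta_1 p(v(i))+\zeta_0)$. Since $\zeta_1\geq 0$ (recall $\|A\|>1>\|A+BK\|$, so $\zeta_1=\|A\|-\|A+BK\|>0$ by (\ref{eq:zeta-def})) and $p\leq\hat p$ by (\ref{eq:p_phat_ineq}), I replace $p$ by $\hat p$, apply the AM--GM inequality to bound the product of $n\triangleq t_2-t_1$ nonnegative factors by the $n$-th power of their arithmetic mean, and then use Jensen's inequality with the concavity of $\hat p$, followed by its monotonicity together with the pathwise bound $\frac1n\sum_{i=t_1}^{t_2-1}v(i)\leq\hat v+\hat\kappa/n$ from (\ref{eq:attack_assumption_two}). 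This yields the deterministic bound $\Phi(n)^n$, where $\Phi(n)\triangleq\zeta_0+\zeta_1\hat p(\hat v+\hat\kappa/n)$, and taking the full expectation preserves it since $\Phi(n)^n$ is deterministic.

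The step I expect to require the most care is converting $\Phi(n)^n$ into a clean geometric bound $\hat\mu\hat\theta^n$, because $\Phi(n)$ is \emph{not} uniformly below $1$: for small $n$ the term $\hat\kappa/n$ is large and $\Phi(n)$ can be as large as $\zeta_0+\zeta_1=\|A\|>1$. The key observation is that $\Phi(n)\to\zeta_0+\zeta_1\hat p(\hat v)=(1-\hat p(\hat v))\|A+BK\|+\hat p(\hat v)\|A\|$ as $n\to\infty$, which by (\ref{eq:abkcond-vhat}) is strictly less than $1$. I would therefore fix $\hat\theta$ strictly between this limit and $1$, find $N$ with $\Phi(n)\leq\hat\theta$ for all $n\geq N$ (so that $\Phi(n)^n\leq\hat\theta^n$ there), and absorb the finitely many small-$n$ terms, each bounded crudely by $\|A\|^n\leq\|A\|^N$, into a sufficiently large constant $\hat\mu$.

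Finally I would take expectations in (\ref{eq:two-norm-ineq}). For the disturbance cross terms I would use the almost-sure bound $\|w(j)\|_2\leq\overline w$ from (\ref{eq:almost_sure_noise_bound}) to pull $\overline w$ out \emph{before} taking expectation, so that $\mathbb{E}[(\prod_{i=j+1}^{t-1}(\zeta_1 l(i)+\zeta_0))\|w(j)\|_2]\leq\overline w\,\mathbb{E}[\prod_{i=j+1}^{t-1}(\zeta_1 l(i)+\zeta_0)]$; note this needs no independence between $w$ and the failure process, which is precisely the advantage of working with the first moment. Applying the core lemma to each product, the initial-condition term contributes $\hat\mu\hat\theta^t\|x_0\|_2$ while the disturbance terms contribute $\overline w(1+\hat\mu\sum_{j=0}^{t-2}\hat\theta^{t-1-j})\leq\overline w(1+\hat\mu\hat\theta/(1-\hat\theta))$ via a geometric series. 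Folding the factor $c_2/c_1$ from (\ref{eq:norm-equivalence}) and these constants into relabeled $\hat\mu,\hat\theta,\hat d$ then gives (\ref{eq:noise_to_state_one}).
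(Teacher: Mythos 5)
Your proposal is correct and follows essentially the same route as the paper: the paper's proof also takes expectations in (\ref{eq:two-norm-ineq}), pulls out $\overline{w}$ via the almost-sure bound, and rests on exactly your core lemma (its Lemma~\ref{LemmaDOne} plus the geometric-series Lemma~\ref{LemmaSum}), proved by conditioning on $\{v(i)\}$, replacing $p$ with $\hat{p}$, reducing the product to $\big(\zeta_0+\zeta_1\hat{p}(\hat{v}+\hat{\kappa}/n)\big)^n$, and absorbing the finitely many small-$n$ windows into the constant via a threshold $T^*$. The only cosmetic difference is that the paper obtains the product bound from concavity of $\ln h$ with $h(v)=\zeta_1\hat{p}(v)+\zeta_0$, whereas you use AM--GM followed by Jensen's inequality on $\hat{p}$; the two are equivalent.
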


\medskip

The proof of Theorem~\ref{TheoremBoundedDisturbance} is given later
in the paper. Theorem~\ref{TheoremBoundedDisturbance} shows that
if jamming attacks satisfy Assumption~\ref{SecondAssumption} with
a sufficiently small $\hat{\phi}_{\mathrm{J}}$ such that (\ref{eq:abkcond-vhat})
holds, then the first moment of the state stays bounded. Furthermore,
the upper bound given in (\ref{eq:noise_to_state_one}) is geometrically
decreasing towards the constant $\hat{d}\overline{w}$, where $\overline{w}$
is an upper bound on the Euclidean norm of disturbance $w(t)$. 

 Notice that the condition (\ref{eq:abkcond-vhat})
of Theorem~\ref{TheoremBoundedDisturbance} and the condition (\ref{eq:abkcond})
in the disturbance-free case in Proposition~\ref{PropositionMomentConvergence}
are in the same form, but use different scalars $\hat{\phi}_{\mathrm{J}}$
and $\overline{\phi}_{\mathrm{J}}$ due to the difference of the jamming
interference characterizations in Assumptions~\ref{FirstAssumption}
and \ref{SecondAssumption}. We remark that for attacks that satisfy
both assumptions, we have $\overline{\phi}_{\mathrm{J}}\leq\hat{\phi}_{\mathrm{J}}$.

As we establish later in the proof of Theorem~\ref{TheoremBoundedDisturbance},
the first-moment upper bound in (\ref{eq:noise_to_state_one}) depends
on parameters $\hat{\kappa}$ and $\hat{\phi}_{\mathrm{J}}$. In particular,
the values of $\hat{\mu}$ and $\hat{d}$ are large when $\hat{k}$
and $\hat{\phi}_{\mathrm{J}}$ take large values. Moreover, the scalar
$\hat{\theta}$ is directly related to the term $(1-\hat{p}(\hat{\phi}_{\mathrm{J}}))\|A+BK\|+\hat{p}(\hat{\phi}_{\mathrm{J}})\|A\|$
on the left-hand side of (\ref{eq:abkcond-vhat}). If this term is
close to zero, then $\hat{\theta}$ is close to zero, which indicates
faster convergence of the bound in (\ref{eq:noise_to_state_one})
towards the constant $\hat{d}\overline{w}$. We note that $(1-\hat{p}(\hat{\phi}_{\mathrm{J}}))\|A+BK\|+\hat{p}(\hat{\phi}_{\mathrm{J}})\|A\|$
represents the behavior of the overall networked control system and
it is composed of the convex combination of the terms $\|A+BK\|$
and $\|A\|$ weighted respectively with the lower bound $(1-\hat{p}(\hat{\phi}_{\mathrm{J}}))$
of the long-term ratio of successful transmissions and the upper bound
$\hat{p}(\hat{\phi}_{\mathrm{J}})$ of the long-term ratio of failed
transmissions.

Our analytical approach differs from the more classical approaches
used when the transmission failure indicator process $\{l(t)\}_{t\in\mathbb{N}_{0}}$
is a Bernoulli process or a Markov chain. In those cases, stability
analysis can rely on the probability of failures $\mathbb{P}[l(t)=1]$
and conditional failure probabilities $\mathbb{P}[l(t)=q|l(t-1)=r]$,
$q,r\in\{0,1\}$, (see \cite{quevedo2011packetized,costa2004discrete}).
In our case, precise information of such probability terms is not
available due to the uncertainty in the generation of attacks. Specifically,
the interference power $\phi_{\mathrm{J}}(t)$ at a given time $t$
is part of attacker's strategy and cannot be known with certainty.
As a result, the transmission failure probability at that time is
also uncertain and cannot be used in the analysis. Note, however,
that Bernoulli-type packet losses are a special case where $\phi_{\mathrm{J}}$
is a constant function.  In this paper, we are interested in the cases
where the interference power level $\phi_{\mathrm{J}}$ is time-varying
and the attacker designs its progression so as to leverage the disturbance
to cause instability as in Example~\ref{ExampleStrategy}. 

A crucial role in our analysis is played by the following lemma, where
we investigate the products of affine functions that involve the transmission
failure indicator $l(\cdot)$ and obtain some upper bounds for their
expected values. As shown later in the proof of Theorem~\ref{TheoremBoundedDisturbance},
such upper bounds allow us to conduct stability analysis without relying
on transmission failure probabilities for each time step. In the derivation
of these bounds, an essential step is to exploit the concavity of
the upper-bounding function $\hat{p}$ given in (\ref{eq:p_phat_ineq}). 

\begin{lemma} \label{LemmaDOne} Suppose that the attacker's interference
power process  $\{\phi_{\mathrm{J}}(t)\in[0,\infty)\}_{t\in\mathbb{N}_{0}}$
satisfies Assumption~\ref{SecondAssumption}. Then for every $\alpha_{1}\geq0$,
$\alpha_{0}\geq0$ that satisfy
\begin{align}
\alpha_{1}\hat{p}(\hat{\phi}_{\mathrm{J}})+\alpha_{0} & <1,\label{eq:alpha_ineq}
\end{align}
there exist scalars $\mu\geq0$ and $\theta\in(0,1)$ such that 
\begin{align}
\mathbb{E}\Big[\prod_{i=t_{1}}^{t_{2}-1}(\alpha_{1}l(i)+\alpha_{0})\Big] & \leq\mu\theta^{(t_{2}-t_{1})},\label{eq:geometric-lemma-result}
\end{align}
for $t_{1},t_{2}\in\mathbb{N}_{0}$ with $t_{1}<t_{2}$. \end{lemma}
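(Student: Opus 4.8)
The plan is to condition on the interference power process, reduce the expectation of the product to a deterministic product, and then exploit the concavity, monotonicity, and continuity of $\hat{p}$ together with the windowed average bound from Assumption~\ref{SecondAssumption}. Throughout I write $n\triangleq t_2-t_1$ and let $\mathcal{V}$ denote the $\sigma$-algebra generated by $\{v(t)\}_{t\in\mathbb{N}_0}$.

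First I would use the conditional independence structure. By (\ref{eq:ldef}), each $l(i)$ is a function of $r(i)$ and $v(i)$ only, and the $r(i)$ are mutually independent and independent of the $v$-process; hence, conditioned on $\mathcal{V}$, the failures are independent with $\mathbb{E}[l(i)\mid\mathcal{V}]=p(v(i))$. The conditional expectation therefore factorizes, and since $\alpha_1\geq0$ and $p\leq\hat{p}$ by (\ref{eq:p_phat_ineq}),
\[
\mathbb{E}\Big[\prod_{i=t_1}^{t_2-1}(\alpha_1 l(i)+\alpha_0)\,\Big|\,\mathcal{V}\Big]=\prod_{i=t_1}^{t_2-1}\big(\alpha_1 p(v(i))+\alpha_0\big)\leq\prod_{i=t_1}^{t_2-1}\big(\alpha_1\hat{p}(v(i))+\alpha_0\big).
\]
Next I would bound this deterministic product by the arithmetic--geometric mean inequality, $\prod_i a_i\leq(\frac1n\sum_i a_i)^n$ for nonnegative $a_i$, producing the $n$-th power of $\alpha_1\cdot\frac1n\sum_i\hat{p}(v(i))+\alpha_0$. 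Jensen's inequality and the concavity of $\hat{p}$ then pull the average inside, $\frac1n\sum_{i=t_1}^{t_2-1}\hat{p}(v(i))\leq\hat{p}(\frac1n\sum_{i=t_1}^{t_2-1}v(i))$, and the monotonicity of $\hat{p}$ combined with the window bound (\ref{eq:attack_assumption_two}), which gives $\frac1n\sum_{i=t_1}^{t_2-1}v(i)\leq\hat{v}+\hat{\kappa}/n$ almost surely, yields the almost-sure estimate $g(n)^n$ with $g(n)\triangleq\alpha_1\hat{p}(\hat{v}+\hat{\kappa}/n)+\alpha_0$. Taking expectations removes the conditioning on $\mathcal{V}$ and leaves $\mathbb{E}[\prod_{i=t_1}^{t_2-1}(\alpha_1 l(i)+\alpha_0)]\leq g(n)^n$.

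Finally I would convert $g(n)^n$ into the geometric form $\mu\theta^n$. Since $\hat{p}$ is continuous, $g(n)\to\alpha_1\hat{p}(\hat{v})+\alpha_0<1$ as $n\to\infty$ by (\ref{eq:alpha_ineq}); fixing any $\theta$ strictly between this limit and $1$, there is an $N$ with $g(n)\leq\theta$, hence $g(n)^n\leq\theta^n$, for all $n\geq N$. For the finitely many $n<N$ I would use $\hat{p}\leq1$ to obtain the crude bound $g(n)\leq\alpha_1+\alpha_0$ and absorb these terms by taking $\mu\triangleq\max\{1,\max_{1\leq n<N}((\alpha_1+\alpha_0)/\theta)^n\}$. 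The essential idea is the combined use of the arithmetic--geometric mean inequality and Jensen's inequality for the concave $\hat{p}$, which is precisely what lets the windowed average constraint (\ref{eq:attack_assumption_two}) control the product without any per-time-step knowledge of the failure probabilities; this is the step I expect to be the crux. The one technical subtlety is the last normalization: because of the offset $\hat{\kappa}/n$ the base $g(n)$ depends on $n$ and may exceed $1$ for small windows, so the geometric decay is recovered only asymptotically and the small-$n$ terms must be swept into the constant $\mu$.
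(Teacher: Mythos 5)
Your proposal is correct and follows essentially the same route as the paper's proof: reduce the expectation to the deterministic product $\prod_{i}(\alpha_1 p(v(i))+\alpha_0)$ via conditional independence given the $v$-process, bound it by $\big(\alpha_1\hat{p}(\hat{v}+\hat{\kappa}/(t_2-t_1))+\alpha_0\big)^{t_2-t_1}$ using concavity of $\hat{p}$ and Assumption~\ref{SecondAssumption}, and then split window lengths into large ones (geometric decay, via continuity of $\hat{p}$ and (\ref{eq:alpha_ineq})) and small ones (absorbed into $\mu$). The only cosmetic difference is that you justify the product-to-average step by AM--GM plus finite Jensen applied to $\hat{p}$, whereas the paper uses concavity of $\ln\big(\alpha_1\hat{p}(\cdot)+\alpha_0\big)$; both give the same intermediate bound, and your variant even sidesteps the paper's separate treatment of the case where some factor vanishes.
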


\begin{proof} For the case where $\alpha_{1}+\alpha_{0}=0$, (\ref{eq:geometric-lemma-result})
holds for any $\mu\geq0$ and $\theta\in(0,1)$. In the following,
we consider the case where $\alpha_{1}+\alpha_{0}>0$. First, by Lemma~2.1
of \cite{ahmetifacwc2017}, 
\begin{equation}
\mathbb{E}\Big[\prod_{i=t_{1}}^{t_{2}-1}(\alpha_{1}l(i)+\alpha_{0})\Big]=\mathbb{E}\Big[\prod_{i=t_{1}}^{t_{2}-1}(\alpha_{1}p(\phi_{\mathrm{J}}(i))+\alpha_{0})\Big].\label{eq:first-eq}
\end{equation}
Next, by (\ref{eq:first-eq}), $\alpha_{1}\geq0$, and $p(\phi)\leq\hat{p}(\phi)$,
$\phi\in[0,\infty)$, we get 
\begin{align}
\mathbb{E}\Big[\prod_{i=t_{1}}^{t_{2}-1}(\alpha_{1}l(i)+\alpha_{0})\Big] & \leq\mathbb{E}\Big[\prod_{i=t_{1}}^{t_{2}-1}h(\phi_{\mathrm{J}}(i))\Big],\label{eq:expxhineq-1-1}
\end{align}
where $h(\phi)\triangleq\alpha_{1}\hat{p}(\phi)+\alpha_{0}$. We note
that $h(\cdot)$ is nondecreasing, concave, and continuous, as $\hat{p}(\cdot)$
also has such properties and $\alpha_{1}\geq0$. To obtain a bound
for $\mathbb{E}[\prod_{i=t_{1}}^{t_{2}-1}h(\phi_{\mathrm{J}}(i))]$
in (\ref{eq:expxhineq-1-1}), we first show 
\begin{align}
\prod_{i=t_{1}}^{t_{2}-1}h(\phi_{\mathrm{J}}(i)) & \leq h^{(t_{2}-t_{1})}\Big(\frac{1}{t_{2}-t_{1}}\sum_{i=t_{1}}^{t_{2}-1}\phi_{\mathrm{J}}(i)\Big).\label{eq:powerbound}
\end{align}
We note that (\ref{eq:powerbound}) holds if $h(\phi_{\mathrm{J}}(i))=0$
for some $i\in\{t_{1},\ldots,t_{2}-1\}$. Now, consider the case where
$h(\phi_{\mathrm{J}}(i))>0$ for all $i\in\{t_{1},\ldots,t_{2}-1\}$.
For this case, we have 
\begin{align}
\ln\prod_{i=t_{1}}^{t_{2}-1}h(\phi_{\mathrm{J}}(i)) & =(t_{2}-t_{1})\Big(\frac{1}{t_{2}-t_{1}}\sum_{i=t_{1}}^{t_{2}-1}\ln h(\phi_{\mathrm{J}}(i))\Big).\label{eq:firstln}
\end{align}
 Here, $\ln h(\cdot)$ is concave, since it is the composition of
a nondecreasing concave function $\ln(\cdot)$ and a concave function
$h(\cdot)$ (see Proposition 2.16 in \cite{avriel2010generalized}
and Section 3.2.4 in \cite{IFACboyd2004convex}). Thus, by (\ref{eq:firstln}),
\begin{align}
\ln\prod_{i=t_{1}}^{t_{2}-1}h(\phi_{\mathrm{J}}(i)) & \leq(t_{2}-t_{1})\ln h\Big(\frac{1}{t_{2}-t_{1}}\sum_{i=t_{1}}^{t_{2}-1}\phi_{\mathrm{J}}(i)\Big),\label{eq:lnineq2}
\end{align}
 which implies (\ref{eq:powerbound}). The interference power process
$\phi_{\mathrm{J}}(\cdot)$ satisfies (\ref{eq:attack_assumption_two})
in Assumption~\ref{SecondAssumption}, and hence, $\frac{1}{t_{2}-t_{1}}\sum_{i=t_{1}}^{t_{2}-1}\phi_{\mathrm{J}}(i)\leq\frac{\hat{\kappa}}{t_{2}-t_{1}}+\hat{\phi}_{\mathrm{J}}$,
almost surely. Thus, noting that $h(\cdot)$ is a nondecreasing function,
by (\ref{eq:powerbound}), we obtain $\prod_{i=t_{1}}^{t_{2}-1}h(\phi_{\mathrm{J}}(i))\leq h^{t_{2}-t_{1}}(\frac{\hat{\kappa}}{t_{2}-t_{1}}+\hat{\phi}_{\mathrm{J}})$,
almost surely. Consequently, we have 
\begin{align}
\mathbb{E}\Big[\prod_{i=t_{1}}^{t_{2}-1}h(\phi_{\mathrm{J}}(i))\Big] & \leq h^{t_{2}-t_{1}}\Big(\frac{\hat{\kappa}}{t_{2}-t_{1}}+\hat{\phi}_{\mathrm{J}}\Big).\label{eq:hfineq-1-1}
\end{align}
Now, by (\ref{eq:alpha_ineq}), we get $h(\hat{\phi}_{\mathrm{J}})<1$.
Therefore, by the continuity of $h(\cdot)$, there exists $\delta>0$
such that $h(\delta+\hat{\phi}_{\mathrm{J}})<1$. As a result, for
sufficiently large values of $t_{2}-t_{1}$, we have $h(\frac{\hat{\kappa}}{t_{2}-t_{1}}+\hat{\phi}_{\mathrm{J}})<1$. 

Let $T^{*}$ be a positive integer such that $h(\frac{\hat{\kappa}}{T^{*}}+\hat{\phi}_{\mathrm{J}})<1$
and let 
\begin{align}
\theta & \triangleq h\Big(\frac{\hat{\kappa}}{T^{*}}+\hat{\phi}_{\mathrm{J}}\Big).\label{eq:ThetaDef}
\end{align}
It follows from (\ref{eq:hfineq-1-1}) that 
\begin{align}
\mathbb{E}\Big[\prod_{i=t_{1}}^{t_{2}-1}h(\phi_{\mathrm{J}}(i))\Big] & \leq\theta^{t_{2}-t_{1}},\label{eq:hjt_part1}
\end{align}
 for all $t_{1},t_{2}\in\mathbb{N}_{0}$ such that $t_{2}-t_{1}\geq T^{*}$.
If $T^{*}=1$, then (\ref{eq:geometric-lemma-result}) holds, by (\ref{eq:hjt_part1}).
If, on the other hand, $T^{*}>1$, then by using $h(\phi_{\mathrm{J}}(t))\leq\alpha_{1}+\alpha_{0}$,
$t\in\mathbb{N}_{0}$, we obtain 
\begin{align}
\mathbb{E}\Big[\prod_{i=t_{1}}^{t_{2}-1}h(\phi_{\mathrm{J}}(i))\Big] & \leq(\alpha_{1}+\alpha_{0})^{t_{2}-t_{1}}\leq(\alpha_{1}+\alpha_{0})^{T^{*}-1},\label{eq:hjt_part2}
\end{align}
 for all $t_{1},t_{2}\in\mathbb{N}_{0}$ such that $0<t_{2}-t_{1}<T^{*}$.
Letting
\begin{align}
\mu & \triangleq(\alpha_{1}+\alpha_{0})^{T^{*}-1}\theta^{-(T^{*}-1)},\label{eq:MuDef}
\end{align}
we obtain (\ref{eq:geometric-lemma-result}), by (\ref{eq:hjt_part1})
and (\ref{eq:hjt_part2}). \end{proof}

Lemma~\ref{LemmaDOne} shows that under Assumption~\ref{SecondAssumption},
the expectation term $\mathbb{E}[\prod_{i=t_{1}}^{t_{2}-1}(\alpha_{1}l(i)+\alpha_{0})]$
with $\alpha_{1}\geq0$, $\alpha_{0}\geq0$ satisfying (\ref{eq:alpha_ineq}),
converges to zero at a geometric rate. By using this lemma, we obtain
the following result. 

\begin{lemma} \label{LemmaSum} Suppose that the attacker's interference
power process $\{\phi_{\mathrm{J}}(t)\in[0,\infty)\}_{t\in\mathbb{N}_{0}}$
satisfies Assumption~\ref{SecondAssumption}. Then for every  $\alpha_{1}\geq0$,
$\alpha_{0}\geq0$ satisfying (\ref{eq:alpha_ineq}), there exists
a scalar $d\geq0$ such that 
\begin{align}
\sum_{j=0}^{t-2}\mathbb{E}\Big[\prod_{i=j+1}^{t-1}(\alpha_{1}l(i)+\alpha_{0})\Big] & \leq d,\quad t\in\{2,3,\ldots\}.\label{eq:sum_result}
\end{align}
\end{lemma}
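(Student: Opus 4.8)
The plan is to reduce the statement directly to Lemma~\ref{LemmaDOne}, since the hypothesis (\ref{eq:alpha_ineq}) on $\alpha_1,\alpha_0$ is exactly the condition required by that lemma. First I would fix $\alpha_1\geq 0$, $\alpha_0\geq 0$ satisfying (\ref{eq:alpha_ineq}) and invoke Lemma~\ref{LemmaDOne} to obtain scalars $\mu\geq 0$ and $\theta\in(0,1)$ for which the geometric bound (\ref{eq:geometric-lemma-result}) holds for every pair $t_1<t_2$. All of the probabilistic and analytic content is already packaged inside that bound, so the remaining task is purely a term-by-term application followed by summing a geometric series.

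Next I would apply this bound to each summand of (\ref{eq:sum_result}). For each index $j\in\{0,\ldots,t-2\}$ I set $t_1=j+1$ and $t_2=t$; since $j\leq t-2$ we have $j+1\leq t-1<t$, so the requirement $t_1<t_2$ of Lemma~\ref{LemmaDOne} holds and the inner expectation matches the left-hand side of (\ref{eq:geometric-lemma-result}). This yields
\begin{align*}
\mathbb{E}\Big[\prod_{i=j+1}^{t-1}(\alpha_1 l(i)+\alpha_0)\Big] \leq \mu\theta^{\,t-j-1}.
\end{align*}

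Summing over $j$ and reindexing by $k=t-j-1$ (so that $k$ runs over $1,\ldots,t-1$ as $j$ runs over $0,\ldots,t-2$), I bound the finite geometric sum by its infinite counterpart:
\begin{align*}
\sum_{j=0}^{t-2}\mathbb{E}\Big[\prod_{i=j+1}^{t-1}(\alpha_1 l(i)+\alpha_0)\Big] \leq \mu\sum_{k=1}^{t-1}\theta^{k} \leq \mu\sum_{k=1}^{\infty}\theta^{k} = \frac{\mu\theta}{1-\theta}.
\end{align*}
Because $\theta\in(0,1)$, the right-hand side is a finite constant independent of $t$, so setting $d\triangleq\mu\theta/(1-\theta)$ establishes (\ref{eq:sum_result}) for all $t\in\{2,3,\ldots\}$.

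There is essentially no serious obstacle: the reduction from $l(\cdot)$ to $p(v(\cdot))$, the concavity argument for $\hat{p}$, and the averaging via Assumption~\ref{SecondAssumption} are all already carried out in Lemma~\ref{LemmaDOne}. The only points requiring care are the index bookkeeping---verifying $t_1<t_2$ so that the lemma applies to every summand, including the degenerate case $t=2$ where the sum reduces to a single term---and the convergence of the geometric series, which is guaranteed precisely by $\theta<1$.
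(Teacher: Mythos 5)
Your proposal is correct and follows essentially the same route as the paper's proof: invoke Lemma~\ref{LemmaDOne} termwise with $t_1=j+1$, $t_2=t$, then bound the resulting finite geometric sum by the infinite series. The only cosmetic difference is your constant $d=\mu\theta/(1-\theta)$, which is marginally tighter than the paper's choice $d=\mu/(1-\theta)$; both are valid.
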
 \medskip

\begin{proof}Since (\ref{eq:alpha_ineq}) holds, it follows from
Lemma~\ref{LemmaDOne} that $\mathbb{E}[\prod_{i=j+1}^{t-1}(\alpha_{1}l(i)+\alpha_{0})]\leq\mu\theta^{(t-j-1)}$,
where $\mu\geq0$ and $\theta\in(0,1)$ are scalars that depend on
$\alpha_{1}$ and $\alpha_{0}$. Letting 
\begin{align}
d & \triangleq\mu/(1-\theta),\label{eq:DDef}
\end{align}
we obtain$\sum_{j=0}^{t-2}\mathbb{E}\Big[\prod_{i=j+1}^{t-1}(\alpha_{1}l(i)+\alpha_{0})\Big]\leq\sum_{j=0}^{t-2}\mu\theta^{(t-j-1)}=\mu\sum_{i=1}^{t-1}\theta^{i}\leq\mu\sum_{i=0}^{\infty}\theta^{i}=d$,
which completes the proof. \end{proof}

In Lemmas~\ref{LemmaDOne} and \ref{LemmaSum}, we obtained the upper-bounding
inequalities (\ref{eq:geometric-lemma-result}) and (\ref{eq:sum_result})
concerning the transmission failure indicator process $\{l(t)\in\{0,1\}\}_{t\in\mathbb{N}_{0}}$.
In our proof of Theorem~\ref{TheoremBoundedDisturbance} given below,
we utilize these inequalities. 

\emph{Proof of Theorem \ref{TheoremBoundedDisturbance}}: By (\ref{eq:two-norm-ineq})
and (\ref{eq:almost_sure_noise_bound}), $\|x(t)\|_{2}\leq\frac{c_{2}}{c_{1}}\left(\prod_{i=0}^{t-1}(\zeta_{1}l(i)+\zeta_{0})\right)\|x_{0}\|_{2}+\frac{c_{2}}{c_{1}}\Big(\sum_{j=0}^{t-2}\big(\prod_{i=j+1}^{t-1}(\zeta_{1}l(i)+\zeta_{0})\big)+1\Big)\overline{w},$
almost surely, and hence, for $t\in\mathbb{N}$, 
\begin{align}
\mathbb{E}[\|x(t)\|_{2}] & \leq\frac{c_{2}}{c_{1}}\mathbb{E}\Big[\prod_{i=0}^{t-1}(\zeta_{1}l(i)+\zeta_{0})\Big]\|x_{0}\|_{2}\nonumber \\
 & \quad+\frac{c_{2}}{c_{1}}\Big(\sum_{j=0}^{t-2}\mathbb{E}\Big[\prod_{i=j+1}^{t-1}(\zeta_{1}l(i)+\zeta_{0})\Big]+1\Big)\overline{w}.\label{eq:expectation_bounded_noise_case}
\end{align}

Next, we apply Lemmas~\ref{LemmaDOne} and \ref{LemmaSum} to obtain
upper bounds for the expectation terms on the right-hand side of (\ref{eq:expectation_bounded_noise_case}).
First, since $\|A\|>1$, (\ref{eq:abkcond-vhat}) implies $\|A+BK\|\in[0,1)$.
Thus, we have $\zeta_{1}>0$ and $\zeta_{0}\in[0,1)$. By letting
$\alpha_{1}=\zeta_{1}$ and $\alpha_{0}=\zeta_{0}$, (\ref{eq:abkcond-vhat})
implies (\ref{eq:alpha_ineq}). Therefore, by Lemmas~\ref{LemmaDOne}
and \ref{LemmaSum}, we have $\mathbb{E}\big[\prod_{i=0}^{t-1}(\zeta_{1}l(i)+\zeta_{0})\big]\leq\mu\theta^{t}$
and $\sum_{j=0}^{t-2}\mathbb{E}\big[\prod_{i=j+1}^{t-1}(\zeta_{1}l(i)+\zeta_{0})\big]\leq d$,
where $\mu\geq0$, $\theta\in(0,1)$, and $d\geq0$ are scalars that
depend on $\zeta_{1}$ and $\zeta_{0}$. Hence, with
\begin{align}
 & \hat{\theta}\triangleq\theta,\quad\hat{\mu}\triangleq\mu c_{2}/c_{1},\quad\hat{d}\triangleq(d+1)c_{2}/c_{1},\label{eq:thetamudhat}
\end{align}
 the inequality (\ref{eq:noise_to_state_one}) follows from (\ref{eq:expectation_bounded_noise_case}).
\hfill $\blacksquare$

\begin{remark} \label{RemarkFirstHatValues} By using (\ref{eq:thetamudhat})
together with (\ref{eq:ThetaDef}), (\ref{eq:MuDef}), and (\ref{eq:DDef}),
we can obtain the values of $\hat{\theta}$, $\hat{\mu}$, and $\hat{d}$
as 
\begin{align}
 & \hat{\theta}=(1-\hat{p}(\frac{\hat{\kappa}}{T^{*}}+\hat{\phi}_{\mathrm{J}}))\|A+BK\|+\hat{p}(\frac{\hat{\kappa}}{T^{*}}+\hat{\phi}_{\mathrm{J}})\|A\|,\label{eq:ThetaHatValue}\\
 & \hat{\mu}=(c_{2}/c_{1})\|A\|{}^{T^{*}-1}\hat{\theta}^{-(T^{*}-1)},\label{eq:MuHatValue}\\
 & \hat{d}=(c_{2}/c_{1})(\|A\|{}^{T^{*}-1}\hat{\theta}^{-(T^{*}-1)}/(1-\hat{\theta})+1),\label{eq:DHatValue}
\end{align}
 where $T^{*}$ is a positive integer that satisfies 
\begin{align*}
 & (1-\hat{p}(\frac{\hat{\kappa}}{T^{*}}+\hat{\phi}_{\mathrm{J}}))\|A+BK\|+\hat{p}(\frac{\hat{\kappa}}{T^{*}}+\hat{\phi}_{\mathrm{J}})\|A\|<1,
\end{align*}
and $c_{1},c_{2}>0$ satisfy (\ref{eq:norm-equivalence}). Note that
such $T^{*}$ always exists. An attacker with large resources can
cause the state norm to grow large. This is also indicated in the
upper bound for the first moment in (\ref{eq:noise_to_state_one}).
If $\hat{\kappa}$ in Assumption~\ref{SecondAssumption} is large,
then $T^{*}$ is large, which makes $\hat{\mu}$ large, as $\hat{\mu}$
is an increasing function of $T^{*}$. Further, since 
\begin{align*}
 & \frac{c_{2}}{c_{1}}\Big(\frac{\hat{\mu}}{1-(1-\hat{p}(\hat{\phi}_{\mathrm{J}}))\|A+BK\|-\hat{p}(\hat{\phi}_{\mathrm{J}})\|A\|}+1\Big)\leq\hat{d},
\end{align*}
 we observe that $\hat{d}$ is large for large values of $\hat{\mu}$
and $\hat{\phi}_{\mathrm{J}}$. On the other hand, for large values
of $T^{*}$, $\hat{\theta}$ is close to $(1-\hat{p}(\hat{\phi}_{\mathrm{J}}))\|A+BK\|+\hat{p}(\hat{\phi}_{\mathrm{J}})\|A\|$.
If the upper bound $\hat{\phi}_{\mathrm{J}}$ of average interference
powers is large, then $\hat{\mu}\hat{\theta}^{t}$ in (\ref{eq:noise_to_state_one})
converges slowly, since $\hat{\theta}$ is close to $1$. \end{remark}

\subsection{Jamming and Disturbance with Finite Second Moment}

\label{subsec:Stabilization-Under-Jamming}

In Theorem~\ref{TheoremBoundedDisturbance}, we explored the case
where the disturbance norm is bounded. Next, we investigate scenarios
where the disturbance may not be bounded. We obtain a relation between
the state and the disturbance similar to those used for noise-to-state
stability analysis of stochastic systems (e.g., \cite{nunez2014,zhang2016noise}).
Specifically, in the next result, we provide an upper bound for the
first moment of the state by using the second moment of the disturbance.

\begin{theorem} \label{TheoremDependentDisturbance} Consider the
closed-loop networked control system (\ref{eq:closed-loop-system}).
Suppose that the attacker's interference power process $\{\phi_{\mathrm{J}}(t)\in[0,\infty)\}_{t\in\mathbb{N}_{0}}$
satisfies Assumption~\ref{SecondAssumption}. Furthermore, suppose
$\mathbb{E}[\|w(t)\|_{2}^{2}]<\infty$, $t\in\mathbb{N}_{0}$. If
\begin{align}
 & (1-\hat{p}(\hat{\phi}_{\mathrm{J}}))\|A+BK\|^{2}+\hat{p}(\hat{\phi}_{\mathrm{J}})\|A\|^{2}<1,\label{eq:abkcond-difficult}
\end{align}
 then there exist $\hat{\mu},\hat{f}\geq0$, and $\hat{\theta}\in(0,1)$
 such that for $t\in\mathbb{N}$,
\begin{align}
\mathbb{E}[\|x(t)\|_{2}]\leq\hat{\mu}\hat{\theta}^{t}\|x_{0}\|_{2}+\hat{f}\max_{i\in\{0,\ldots,t-1\}}(\mathbb{E}[\|w(i)\|_{2}^{2}])^{\frac{1}{2}}.\label{eq:noise_to_state_three}
\end{align}
\end{theorem}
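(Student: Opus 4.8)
The plan is to run the same argument as in Theorem~\ref{TheoremBoundedDisturbance} but replace the almost-sure bound $\|w(t)\|_2 \le \overline{w}$ with a second-moment bound, which forces me to work with squared norms rather than first powers. The exponent change from $\|A\|,\|A+BK\|$ in \eqref{eq:abkcond-vhat} to $\|A\|^2,\|A+BK\|^2$ in \eqref{eq:abkcond-difficult} is the tell-tale sign: I expect to square the pathwise inequality \eqref{eq:two-norm-ineq} and then take expectations, so that the products $\prod(\zeta_1 l(i)+\zeta_0)$ become $\prod(\zeta_1 l(i)+\zeta_0)^2$. Since $l(i)\in\{0,1\}$ is idempotent, $(\zeta_1 l(i)+\zeta_0)^2 = \beta_1 l(i) + \beta_0$ with $\beta_1 \triangleq \zeta_1^2 + 2\zeta_1\zeta_0 = \|A\|^2 - \|A+BK\|^2$ and $\beta_0 \triangleq \zeta_0^2 = \|A+BK\|^2$; crucially $\beta_1\ge 0$, $\beta_0\ge 0$, and condition \eqref{eq:abkcond-difficult} is exactly $\beta_1\hat{p}(\hat{v})+\beta_0<1$, i.e.\ \eqref{eq:alpha_ineq} for these new coefficients. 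So Lemmas~\ref{LemmaDOne} and \ref{LemmaSum} apply verbatim to $\beta_1,\beta_0$.

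\textbf{Main steps.} First I would square \eqref{eq:two-norm-ineq} and expand. The square of the three-term sum produces diagonal terms (the squared $x_0$ term, the squared disturbance terms, the squared final term) and cross terms. The cross terms are the delicate part, because they couple the transmission-failure products with the disturbance norms $\|w(j)\|_2$. To control a generic cross term $\mathbb{E}\big[(\prod_{i\in I}(\zeta_1 l(i)+\zeta_0))(\prod_{i\in J}(\zeta_1 l(i)+\zeta_0))\|w(j)\|_2\|w(k)\|_2\big]$ I would apply Cauchy--Schwarz to split off the disturbance, bounding $\|w(j)\|_2\|w(k)\|_2$ in expectation by $\max_i \mathbb{E}[\|w(i)\|_2^2]$ (this is exactly where the $\max$ and the square root in \eqref{eq:noise_to_state_three} come from). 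Since the statement only asks for the \emph{first} moment of the state on the left and offers the second moment of disturbance on the right, the natural route is: bound $\mathbb{E}[\|x(t)\|_2] \le (\mathbb{E}[\|x(t)\|_2^2])^{1/2}$ by Jensen, estimate $\mathbb{E}[\|x(t)\|_2^2]$ using the squared inequality, and then take a square root at the end. The square root neatly converts the geometric factor $\theta^t$ on the $x_0^2$ term into $\theta^{t/2}\|x_0\|_2$, so I would set $\hat{\theta}$ to be the square root of the Lemma~\ref{LemmaDOne} rate for $(\beta_1,\beta_0)$ (equivalently build $\hat\theta$ directly as in \eqref{eq:ThetaHatValue} with $\hat p$ evaluated at the same argument), absorbing constants into $\hat\mu$ and $\hat f$.

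\textbf{The obstacle.} The genuinely hard part is the cross terms that mix the $x_0$-product with a disturbance-product, e.g.\ $\mathbb{E}\big[\prod_{i=0}^{t-1}(\zeta_1 l(i)+\zeta_0)\cdot\prod_{i=j+1}^{t-1}(\zeta_1 l(i)+\zeta_0)\cdot\|x_0\|_2\|w(j)\|_2\big]$, because here a single indicator $l(i)$ with $i>j$ appears in \emph{both} products, so merging them gives $(\zeta_1 l(i)+\zeta_0)^2=\beta_1 l(i)+\beta_0$ on the overlap and a bare $(\zeta_1 l(i)+\zeta_0)$ on the non-overlap, producing a mixed product that is not directly one of the forms Lemma~\ref{LemmaDOne} handles. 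The clean way around this is to apply Cauchy--Schwarz \emph{first}, separating the expectation into $(\mathbb{E}[(\prod_{i=0}^{t-1}(\zeta_1 l(i)+\zeta_0))^2]\,\|x_0\|_2^2)^{1/2}$ times $(\mathbb{E}[(\prod_{i=j+1}^{t-1}(\zeta_1 l(i)+\zeta_0))^2\|w(j)\|_2^2])^{1/2}$, after which each factor is a pure squared-product — hence of the $\beta_1 l+\beta_0$ form — to which Lemmas~\ref{LemmaDOne} and \ref{LemmaSum} apply. A second subtlety worth flagging is statistical dependence between $\{l(t)\}$ and $\{w(t)\}$, which the paper explicitly allows; this is exactly why I route everything through Cauchy--Schwarz and never try to factor $\mathbb{E}[(\prod \cdots)\|w\|_2^2]$ as a product of expectations. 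Once the cross terms are tamed, summing the geometric series (as in Lemma~\ref{LemmaSum}) over $j$ gives a bounded coefficient, and collecting the $x_0$-terms versus the disturbance-terms yields the two summands in \eqref{eq:noise_to_state_three} after the final square root.
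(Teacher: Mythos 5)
Your reduction of the coefficients is right ($l^2=l$ gives $\beta_1=\|A\|^2-\|A+BK\|^2$, $\beta_0=\|A+BK\|^2$, and \eqref{eq:abkcond-difficult} is exactly $\beta_1\hat{p}(\hat{v})+\beta_0<1$), but the overall route --- square \eqref{eq:two-norm-ineq}, bound $\mathbb{E}[\|x(t)\|_2^2]$, then take a square root --- has a gap that cannot be closed under the theorem's hypotheses. After squaring, every disturbance enters \emph{quadratically} coupled to an indicator product: the diagonal terms, and also the factors your ``Cauchy--Schwarz first'' step produces, have the form $\mathbb{E}\big[\big(\prod_{i=j+1}^{t-1}(\zeta_1 l(i)+\zeta_0)\big)^2\|w(j)\|_2^2\big]$. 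Contrary to what you assert, this is \emph{not} a pure squared product of the $\beta_1 l+\beta_0$ form; it is a squared product times $\|w(j)\|_2^2$, and Lemmas~\ref{LemmaDOne} and \ref{LemmaSum} cannot touch it until the disturbance is peeled off. Since $\{l(t)\}$ and $\{w(t)\}$ may be dependent (as you yourself note), you cannot factor the expectation. Any H\"older/Cauchy--Schwarz split of $\mathbb{E}[X^2Y^2]$ into an $X$-only and a $Y$-only factor with $Y=\|w(j)\|_2$ forces moments of $w$ higher than the second (the only pairing that keeps $w$ at second moments is $L^\infty$--$L^1$, and the a.s.\ bound on the squared product is $\|A\|^{2(t-j-1)}$, which grows geometrically and destroys the decay). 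The hypothesis is only $\mathbb{E}[\|w(t)\|_2^2]<\infty$, so this is fatal. Worse, even if you added a fourth-moment assumption on $w$, the inner split would require controlling $\mathbb{E}[(\prod)^4]$, which via $l^2=l$ needs $(1-\hat{p}(\hat{v}))\|A+BK\|^4+\hat{p}(\hat{v})\|A\|^4<1$ --- strictly stronger than \eqref{eq:abkcond-difficult} --- so your route cannot recover the theorem as stated. In essence you are trying to prove a stronger ($L^2$-of-state) statement than the theorem asserts, and that stronger statement needs stronger hypotheses.

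The paper's proof avoids this by never squaring the state inequality. It takes expectations in \eqref{eq:two-norm-ineq} directly, so each $\|w(j)\|_2$ appears \emph{linearly}, and then applies Schwarz's inequality once per term: $\mathbb{E}\big[\big(\prod_{i=j+1}^{t-1}(\zeta_1 l(i)+\zeta_0)\big)\|w(j)\|_2\big]\leq\big(\mathbb{E}\big[\big(\prod_{i=j+1}^{t-1}(\zeta_1 l(i)+\zeta_0)\big)^2\big]\big)^{1/2}\big(\mathbb{E}[\|w(j)\|_2^2]\big)^{1/2}$. Now the squared factor contains \emph{only} indicators; idempotence turns it into $\prod(\alpha_1 l(i)+\alpha_0)$ with your $(\beta_1,\beta_0)$, condition \eqref{eq:abkcond-difficult} is exactly \eqref{eq:gamma_cond}, and summing $\mu^{1/2}\theta^{(t-j-1)/2}$ over $j$ is Lemma~\ref{LemmaSumTwo}. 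The $x_0$ term is handled by Lemma~\ref{LemmaDOne} with $(\zeta_1,\zeta_0)$ (note \eqref{eq:abkcond-difficult} implies \eqref{eq:alpha_ineq}), the final term by Jensen, and the disturbance enters only through $(\mathbb{E}[\|w(i)\|_2^2])^{1/2}$. This is precisely why the paper insists on bounding the \emph{first} moment of the state: it is what makes the cross terms tractable for dependent $l$ and $w$ under second-moment assumptions --- the feature your squaring step gives away.
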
 \medskip

The proof of this result relies on the following lemma. 

\begin{lemma} \label{LemmaSumTwo} Suppose that the attacker's interference
power process $\{\phi_{\mathrm{J}}(t)\in[0,\infty)\}_{t\in\mathbb{N}_{0}}$
satisfies Assumption~\ref{SecondAssumption}. Then for every  $\gamma_{1}\geq0$,
$\gamma_{0}\geq0$ that satisfy 
\begin{align}
(\gamma_{1}^{2}+2\gamma_{1}\gamma_{0})\hat{p}(\hat{\phi}_{\mathrm{J}})+\gamma_{0}^{2} & <1,\label{eq:gamma_cond}
\end{align}
 there exists a scalar $f\geq0$ such that 
\begin{align}
\sum_{j=0}^{t-2}\Big(\mathbb{E}\Big[\big(\prod_{i=j+1}^{t-1}(\gamma_{1}l(i)+\gamma_{0})\big)^{2}\Big]\Big)^{\frac{1}{2}} & \leq f,\quad t\in\{2,3,\ldots\}.\label{eq:l_d1_ineq-1}
\end{align}
\end{lemma} \medskip

\begin{proof} For every $j\in\{0,1,\ldots,t-2\}$, $t\in\{2,3,\ldots\}$,
we have $(\prod_{i=j+1}^{t-1}(\gamma_{1}l(i)+\gamma_{0})){}^{2}=\prod_{i=j+1}^{t-1}(\gamma_{1}l(i)+\gamma_{0})^{2}=\prod_{i=j+1}^{t-1}(\gamma_{1}^{2}l^{2}(i)+2\gamma_{1}\gamma_{0}l(i)+\gamma_{0}^{2}).$
Let $\alpha_{1}\triangleq\gamma_{1}^{2}+2\gamma_{1}\gamma_{0}$ and
$\alpha_{0}\triangleq\gamma_{0}^{2}$. Since $l^{2}(i)=l(i)$, it
follows that 
\begin{equation}
\mathbb{E}\Big[\big(\prod_{i=j+1}^{t-1}(\gamma_{1}l(i)+\gamma_{0})\big)^{2}\Big]=\mathbb{E}\Big[\prod_{i=j+1}^{t-1}(\alpha_{1}l(i)+\alpha_{0})\Big].\label{eq:gamma-alpha-eq}
\end{equation}
By (\ref{eq:gamma_cond}), (\ref{eq:alpha_ineq}) holds. Therefore,
it follows from Lemma~\ref{LemmaDOne} that $\mathbb{E}\Big[\prod_{i=j+1}^{t-1}(\alpha_{1}l(i)+\alpha_{0})\Big]\leq\mu\theta^{(t-j-1)}$,
where $\mu\geq0$ and $\theta\in(0,1)$ are scalars that depend on
$\alpha_{1}$ and $\alpha_{0}$. Letting 
\begin{align}
f & \triangleq\mu^{\frac{1}{2}}/(1-\theta^{\frac{1}{2}}),\label{eq:FDef}
\end{align}
we obtain
\begin{align}
 & \sum_{j=0}^{t-2}\Big(\mathbb{E}\Big[\big(\prod_{i=j+1}^{t-1}(\gamma_{1}l(i)+\gamma_{0})\big)^{2}\Big]\Big)^{\frac{1}{2}}\leq\sum_{j=0}^{t-2}\mu^{\frac{1}{2}}\theta^{\frac{1}{2}(t-j-1)}\nonumber \\
 & \quad=\mu^{\frac{1}{2}}\sum_{i=1}^{t-1}\theta^{\frac{1}{2}i}\leq\mu^{\frac{1}{2}}\sum_{i=0}^{\infty}\theta^{\frac{1}{2}i}=f,\label{eq:expineq-1}
\end{align}
which completes the proof. \end{proof} 

Lemma~\ref{LemmaSumTwo} enables us to deal with quadratic terms
that involve the failure indicator $l(\cdot)$. We are now ready to
prove Theorem~\ref{TheoremDependentDisturbance}.

\emph{Proof of Theorem \ref{TheoremDependentDisturbance}}: By (\ref{eq:two-norm-ineq}),
\begin{align}
\mathbb{E}[\|x(t)\|_{2}] & \leq\frac{c_{2}}{c_{1}}\mathbb{E}\big[\prod_{i=0}^{t-1}(\zeta_{1}l(i)+\zeta_{0})\big]\|x_{0}\|_{2}\nonumber \\
 & \quad+\frac{c_{2}}{c_{1}}\sum_{j=0}^{t-2}\mathbb{E}\Big[\Big(\prod_{i=j+1}^{t-1}(\zeta_{1}l(i)+\alpha_{0})\Big)\|w(j)\|_{2}\Big]\nonumber \\
 & \quad+\frac{c_{2}}{c_{1}}\mathbb{E}[\|w(t-1)\|_{2}],\quad t\in\mathbb{N}.\label{eq:expectation_two-norm}
\end{align}
To show (\ref{eq:noise_to_state_three}), we obtain upper bounds for
the expectation terms on the right-hand side of (\ref{eq:expectation_two-norm})
by using Schwarz's and Jensen's inequalities. First, by Schwarz's
inequality (see Section~6.8 of \cite{williams2010probability}),
\begin{align}
 & \mathbb{E}\Big[\Big(\prod_{i=j+1}^{t-1}(\zeta_{1}l(i)+\zeta_{0})\Big)\|w(j)\|_{2}\Big]\nonumber \\
 & \quad\leq\Big(\mathbb{E}\big[\big(\prod_{i=j+1}^{t-1}(\zeta_{1}l(i)+\zeta_{0})\big)^{2}\big]\Big)^{\frac{1}{2}}\big(\mathbb{E}[\|w(j)\|_{2}^{2}]\big)^{\frac{1}{2}}.\label{eq:schwarz-result}
\end{align}
Furthermore, by Jensen's inequality (see Section~6.6 of \cite{williams2010probability}),
\begin{align}
\mathbb{E}[\|w(t-1)\|_{2}] & \leq\big(\mathbb{E}[\|w(t-1)\|_{2}^{2}]\big)^{\frac{1}{2}}.\label{eq:jensen-result}
\end{align}
As $\Big(\mathbb{E}\big[\big(\prod_{i=j+1}^{t-1}(\zeta_{1}l(i)+\zeta_{0})\big)^{2}\big]\Big)^{\frac{1}{2}}\geq0$,
we obtain from (\ref{eq:expectation_two-norm})--(\ref{eq:jensen-result}),
\begin{align}
 & \mathbb{E}[\|x(t)\|_{2}]\leq\frac{c_{2}}{c_{1}}\mathbb{E}\big[\prod_{i=0}^{t-1}(\zeta_{1}l(i)+\zeta_{0})\big]\|x_{0}\|_{2}\nonumber \\
 & \,\,\quad+\frac{c_{2}}{c_{1}}\Bigg(\sum_{j=0}^{t-2}\Big(\mathbb{E}\Big[\big(\prod_{i=j+1}^{t-1}(\zeta_{1}l(i)+\zeta_{0})\big)^{2}\Big]\Big)^{\frac{1}{2}}+1\Bigg)\nonumber \\
 & \,\,\qquad\cdot\max_{i\in\{0,\ldots,t-1\}}(\mathbb{E}[\|w(i)\|_{2}^{2}])^{\frac{1}{2}},\quad t\in\mathbb{N}.\label{eq:exp_ineq_in_noise_three}
\end{align}
First, we apply Lemma~\ref{LemmaDOne} to find an upper bound of
the term $\mathbb{E}\big[\prod_{i=0}^{t-1}(\zeta_{1}l(i)+\zeta_{0})\big]$
on the right-hand side of (\ref{eq:exp_ineq_in_noise_three}). To
this end let $\alpha_{1}\triangleq\zeta_{1}$ and $\alpha_{2}\triangleq\zeta_{2}$.
Since $\|A\|>1$, it follows from (\ref{eq:abkcond-difficult}) that
$\|A+BK\|<1$. As a result, $\alpha_{1}=\zeta_{1}>0$ and $\alpha_{0}=\zeta_{0}\in[0,1)$.
Furthermore, (\ref{eq:abkcond-difficult}) implies $(\zeta_{1}^{2}+2\zeta_{1}\zeta_{0})\hat{p}(\hat{\phi}_{\mathrm{J}})+\zeta_{0}^{2}<1$.
Using this inequality together with $\hat{p}(\hat{\phi}_{\mathrm{J}})\leq1$,
we obtain 
\begin{align*}
(\zeta_{1}\hat{p}(\hat{\phi}_{\mathrm{J}})+\zeta_{0})^{2} & =\zeta_{1}^{2}\hat{p}^{2}(\hat{\phi}_{\mathrm{J}})+2\zeta_{1}\zeta_{0}\hat{p}(\hat{\phi}_{\mathrm{J}})+\zeta_{0}^{2}\\
 & \leq(\zeta_{1}^{2}+2\zeta_{1}\zeta_{0})\hat{p}(\hat{\phi}_{\mathrm{J}})+\zeta_{0}^{2}<1,
\end{align*}
which implies (\ref{eq:alpha_ineq}). It then follows from Lemma~\ref{LemmaDOne}
that 
\begin{align}
\mathbb{E}\Big[\prod_{i=0}^{t-1}(\zeta_{1}l(i)+\zeta_{0})\Big] & \leq\mu\theta^{t},\label{eq:up1}
\end{align}
 where $\mu\geq0$ and $\theta\in(0,1)$ depend on $\zeta_{1}$ and
$\zeta_{0}$. Next, we apply Lemma~\ref{LemmaSumTwo} to find an
upper bound of the summation term $\sum_{j=0}^{t-2}\Big(\mathbb{E}\Big[\big(\prod_{i=j+1}^{t-1}(\zeta_{1}l(i)+\zeta_{0})\big)^{2}\Big]\Big)^{\frac{1}{2}}$.
Specifically, let $\gamma_{1}\triangleq\zeta_{1}$ and $\gamma_{0}\triangleq\zeta_{0}$.
By (\ref{eq:abkcond-difficult}), we have (\ref{eq:gamma_cond}).
Noting that $\gamma_{1}>0$ and $\gamma_{0}\in[0,1)$, we obtain by
Lemma~\ref{LemmaSumTwo} that 
\begin{align}
\sum_{j=0}^{t-2}\Big(\mathbb{E}\Big[\big(\prod_{i=j+1}^{t-1}(\zeta_{1}l(i)+\zeta_{0})\big)^{2}\Big]\Big)^{\frac{1}{2}} & \leq f,\label{eq:up2}
\end{align}
 where $f\geq0$ depends on $\zeta_{1}$ and $\zeta_{0}$. Now, by
letting 
\begin{align}
\hat{\theta}\triangleq\theta,\quad\hat{\mu} & \triangleq\mu c_{2}/c_{1},\quad\hat{f}\triangleq(f+1)c_{2}/c_{1},\label{eq:SecondThetaMuFHatDef}
\end{align}
we obtain (\ref{eq:noise_to_state_three}) from (\ref{eq:exp_ineq_in_noise_three})--(\ref{eq:up2}).
\hfill $\blacksquare$

Theorem~\ref{TheoremDependentDisturbance} shows
that if the jamming attacks satisfy Assumption~\ref{SecondAssumption}
with a sufficiently small $\hat{\phi}_{\mathrm{J}}$ such that (\ref{eq:abkcond-difficult})
holds, then the first moment of the state satisfies the bound in (\ref{eq:noise_to_state_three}). 

\begin{remark}The constants $\hat{\theta}$ and $\hat{\mu}$ of the
first-moment inequality (\ref{eq:noise_to_state_three}) are the same
as those provided in Remark~\ref{RemarkFirstHatValues} for the bounded-disturbance
case. Specifically, $\hat{\theta}$ and $\hat{\mu}$ are given respectively
by (\ref{eq:ThetaHatValue}) and (\ref{eq:MuHatValue}). Furthermore,
$\hat{f}\geq0$ in (\ref{eq:noise_to_state_three}) can be obtained
from (\ref{eq:FDef}) and (\ref{eq:SecondThetaMuFHatDef}) as 
\begin{align}
\hat{f}=\frac{c_{2}}{c_{1}}\Big(\frac{\|A\|{}^{(\tilde{T}^{*}-1)}\tilde{\theta}^{-(\tilde{T}^{*}-1)/2}}{1-\tilde{\theta}^{1/2}}+1\Big),\label{eq:FHatValue}
\end{align}
where $\tilde{\theta}\triangleq(1-\hat{p}(\frac{\hat{\kappa}}{\tilde{T}^{*}}+\hat{\phi}_{\mathrm{J}}))\|A+BK\|^{2}+\hat{p}(\frac{\hat{\kappa}}{\tilde{T}^{*}}+\hat{\phi}_{\mathrm{J}})\|A\|^{2}$,
and $\tilde{T}^{*}$ is a positive integer that satisfies $(1-\hat{p}(\frac{\hat{\kappa}}{\tilde{T}^{*}}+\hat{\phi}_{\mathrm{J}}))\|A+BK\|^{2}+\hat{p}(\frac{\hat{\kappa}}{\tilde{T}^{*}}+\hat{\phi}_{\mathrm{J}})\|A\|^{2}<1$.
\end{remark}

Theorem~\ref{TheoremDependentDisturbance} is applicable to scenarios
where the condition (\ref{eq:almost_sure_noise_bound}) of Theorem~\ref{TheoremBoundedDisturbance}
may fail to hold. In particular, if disturbance distributions have
infinite support, then (\ref{eq:almost_sure_noise_bound}) does not
hold (e.g., Gaussian distribution with $w(t)\sim\mathcal{N}(m,\Sigma)$
where $m\in\mathbb{R}^{n}$ and $\Sigma\in\mathbb{R}^{n\times n}$
is a positive-definite matrix). In such cases, Theorem~\ref{TheoremDependentDisturbance}
can be utilized. If $\mathbb{E}[\|w(t)\|_{2}^{2}]\leq\tilde{w}$ holds
for all $t\in\mathbb{N}_{0}$ with a scalar $\tilde{w}\geq0$, then
it follows from (\ref{eq:noise_to_state_three}) that $\limsup_{t\to\infty}\mathbb{E}[\|x(t)\|_{2}]\leq\hat{f}\tilde{w}^{\frac{1}{2}}$,
indicating the long-run boundedness of expected state norm. 

Although Theorem~\ref{TheoremDependentDisturbance} is applicable
to a wider range of scenarios in terms of the disturbance, the condition
(\ref{eq:abkcond-difficult}) is more restrictive than the condition
(\ref{eq:abkcond-vhat}) of Theorem~\ref{TheoremBoundedDisturbance}.
In particular, we have $\big((1-\hat{p}(\hat{\phi}_{\mathrm{J}}))\|A+BK\|+\hat{p}(\hat{\phi}_{\mathrm{J}})\|A\|\big)^{2}<(1-\hat{p}(\hat{\phi}_{\mathrm{J}}))\|A+BK\|^{2}+\hat{p}(\hat{\phi}_{\mathrm{J}})\|A\|^{2}$
for $\hat{p}(\hat{\phi}_{\mathrm{J}})\in(0,1)$ indicating that (\ref{eq:abkcond-difficult})
implies (\ref{eq:abkcond-vhat}), but not vice versa. We also note
that the finite second-moment condition in Theorem~\ref{TheoremDependentDisturbance}
holds in many control engineering scenarios. A particular example
is the Gaussian measurement noise setting. 

It is interesting that both Theorems~\ref{TheoremBoundedDisturbance}
and \ref{TheoremDependentDisturbance} can be used for assessing stability
in the scenarios where the transmission failure indicator process
$\{l(t)\}_{t\in\mathbb{N}_{0}}$ and the disturbance process $\{w(t)\}_{t\in\mathbb{N}_{0}}$
are not independent of each other. This is the case, e.g., when the
state measurements received by the controller are subject to noise.
Note also that the conditions in both theorems can be checked using
different induced matrix norms $\|\cdot\|$. Certain norms can provide
less conservative results, as illustrated in Section~\ref{sec:Illustrative-Numerical-Examples}. 

\section{Numerical Example}

\label{sec:Illustrative-Numerical-Examples}

Consider the networked control system (\ref{eq:closed-loop-system})
with 
\begin{align*}
A=\left[\begin{array}{cc}
0.1 & -1\\
1.1 & 1.8
\end{array}\right],\, & B=\left[\begin{array}{c}
0\\
1
\end{array}\right],\,K=[-0.9277\,-1.2615],
\end{align*}
 and  the channel model from Example~\ref{ExampleOutage} with outage
probability $p$ given by (\ref{eq:p-outage}), where $\xi=2$, $\sigma=0.05$,
$\underline{\gamma}=2$, and $b_{1}=b_{2}=1$. 

We first investigate the disturbance-free case ($w(t)\equiv0$). Noting
that $p$ is a concave, continuous, and nondecreasing function, we
set $\hat{p}(\phi)\triangleq p(\phi)$, which satisfies (\ref{eq:p_phat_ineq}).
By Theorem~3.5 of \cite{ahmetifacwc2017}, the system is almost surely
asymptotically stable under any attacks that satisfy Assumption~\ref{FirstAssumption}
with $\overline{\phi}_{\mathrm{J}}\leq0.62$. The analysis in \cite{ahmetifacwc2017}
is Lyapunov-based, and for the case with $\overline{\phi}_{\mathrm{J}}=0.62$,
it uses the Lyapunov-like function $V(x)\triangleq x^{\mathrm{T}}Px$
with the positive-definite matrix 
\begin{align}
P & =\left[\begin{array}{cc}
0.7728 & 0.8554\\
0.8554 & 3.2649
\end{array}\right].\label{eq:example-p-matrix}
\end{align}
 The matrix $P$ is also useful for the first-moment stability analysis.
In particular, we can use the matrix norm $\|\cdot\|$ induced by
the vector norm $\|x\|_{P}\triangleq\sqrt{x^{\mathrm{T}}Px}$. By
using this matrix norm, the stability condition in Proposition~A.1
 is satisfied for $\overline{\phi}_{\mathrm{J}}\leq0.27$. This indicates
that the networked control system (\ref{eq:closed-loop-system}) without
disturbance is first-moment geometrically stable under jamming attacks
that satisfy Assumption~\ref{FirstAssumption} with $\overline{\phi}_{\mathrm{J}}\leq0.27$.
Hence, in the disturbance-free case $\mathbb{E}[\|x(t)\|_{2}]$ converges
to zero with a geometric rate. The choice of the matrix norm is important
for stability analysis. For instance, in this example, the stability
condition in Proposition~A.1 does not hold with matrix norms induced
by $1$-norm, Euclidean-norm, or infinity norm, because for those
norms, $\|A+BK\|>1$. 

\begin{figure}[t]
\centering  \includegraphics[width=0.9\columnwidth]{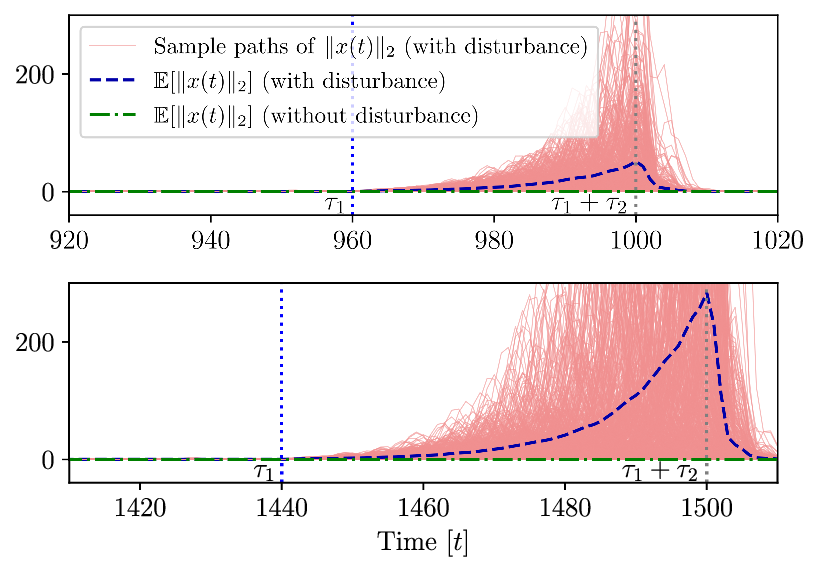} 

\caption{Effect of jamming attacks following (\ref{eq:ExampleStrategyV}) with
and without disturbance (Top: $\tau_{1}=960$, $\tau_{2}=40$, $\phi_{\mathrm{J}}^{*}=6.75$;
Bottom: $\tau_{1}=1440$, $\tau_{2}=60$, $\phi_{\mathrm{J}}^{*}=6.75$).
Approximate first moments $\mathbb{E}[\|x(t)\|_{2}]$ are obtained
through $500$ simulations with $x_{0}=[1,\,1]^{\mathrm{T}}$. }
 \label{Flo:Example1} 
\end{figure}

\subsubsection{Disturbance-free scenario}

As discussed in Section~\ref{subsec:Joint-Effect-of}, Assumption~\ref{FirstAssumption}
allows the attacker to jam the channel with very large interference
powers after waiting without attacking for sufficiently long durations.
For instance, for the attack strategy considered in (\ref{eq:ExampleStrategyV})
with $\tau_{1}=960$, $\tau_{2}=40$, and $\phi_{\mathrm{J}}^{*}=6.75$,
Assumption~\ref{FirstAssumption} is satisfied with $\overline{\kappa}=0$
and $\overline{\phi}_{\mathrm{J}}=0.27$. In the disturbance-free
case, this attack strategy does not create a problem for stability
since $\overline{\phi}_{\mathrm{J}}$ is sufficiently small. In particular,
after the long duration $\tau_{1}$ without attacks, the state norm
gets very close to zero, and as a result, the state norm after the
attack period of $\tau_{2}$ time steps is also small. 

\subsubsection{Scenarios with disturbance}

By contrast, in the case with disturbance, the attack strategy (\ref{eq:ExampleStrategyV})
makes the state norm grow at time $\tau_{1}+\tau_{2}$. This is because,
even after the long attack-free duration, the state norm cannot get
close to zero due to the disturbance. This is shown in the top part
of Fig.~\ref{Flo:Example1} with the disturbance given by 
\begin{align}
w(t) & =[\begin{array}{cc}
\cos(\vartheta)\widetilde{w}(t) & \sin(\vartheta)\widetilde{w}(t)\end{array}]^{\mathrm{T}},\quad t\in\mathbb{N}_{0},\label{eq:disturbancevartheta}
\end{align}
 where $\vartheta=\pi/2$ and $\widetilde{w}(t)\in\mathbb{R}$ at
each time $t$ is uniformly distributed in $[-0.5,0.5]$. Under disturbance,
the length $\tau_{2}$ of the attack period directly affects the growth
of the state norm. The attacker can increase the waiting time $\tau_{1}$
to attack with a longer duration $\tau_{2}$ with the same high interference
power $\phi_{\mathrm{J}}^{*}$ to make the state norm grow, while
still satisfying Assumption~\ref{FirstAssumption}. 

In the bottom part of Fig.~\ref{Flo:Example1}, we see that for the
same disturbance but with $\tau_{2}=60$, the state is driven to larger
values. Notice that with $\tau_{1}=1440$, $\tau_{2}=60$, and $\phi_{\mathrm{J}}^{*}=6.75$,
Assumption~\ref{FirstAssumption} is also satisfied with $\overline{\kappa}=0$
and $\overline{\phi}_{\mathrm{J}}=0.27$. Although after the time
$\tau_{1}+\tau_{2}$, the effect of the attack diminishes, the attacker
can repeat cycles of sleeping and jamming, and the state norm may
grow if the attacker uses higher interference powers for longer durations.
To guarantee a predetermined bound on the expected state norm, interference
power levels need to be restricted. This is achieved by Assumption~\ref{SecondAssumption}.
Under Assumption~\ref{SecondAssumption}, the attacker can attack
with a jamming interference power $\phi_{\mathrm{J}}^{*}>\hat{\phi}_{\mathrm{J}}$
consecutively for at most $\lfloor\hat{\kappa}/(\phi_{\mathrm{J}}^{*}-\hat{\phi}_{\mathrm{J}})\rfloor$
time steps. For instance, with $\hat{\kappa}=259.2$ and $\hat{\phi}_{\mathrm{J}}=0.27$,
the jamming attacks in the top part of Fig.~\ref{Flo:Example1} satisfy
Assumption~\ref{SecondAssumption}. However, the jamming attacks
in the bottom part do not satisfy Assumption~\ref{SecondAssumption}
with the same $\hat{\kappa}$ and $\hat{\phi}_{\mathrm{J}}$ due to
the longer attack duration. For a duration of $60$ time steps, the
maximum allowed interference power is $\phi_{\mathrm{J}}^{*}=4.59$.
We remark that the parameters $\hat{\kappa}$ and $\hat{\phi}_{\mathrm{J}}$
can be selected to reflect the capabilities of the attacker. 

If the jamming strategy satisfies Assumption~\ref{SecondAssumption}
with $\hat{\phi}_{\mathrm{J}}\leq0.27$, then by Theorem~\ref{TheoremBoundedDisturbance},
the first moment of the state satisfies the bound in (\ref{eq:noise_to_state_one})
for any bounded disturbance. If the disturbance is not bounded, then
Theorem~\ref{TheoremDependentDisturbance} can be applied; by Theorem~\ref{TheoremDependentDisturbance},
the bound in (\ref{eq:noise_to_state_three}) holds if the attacker
is less powerful with $\hat{\phi}_{\mathrm{J}}\leq0.1$.

The first moment bound provided in (\ref{eq:noise_to_state_one})
can be evaluated using the values of $\hat{\theta}$, $\hat{\mu}$,
and $\hat{d}$ given in Remark~\ref{RemarkFirstHatValues}. We note
that the bound is not tight for this example. This is partly because
the inequalities (\ref{eq:two-norm-ineq}) and (\ref{eq:expectation_bounded_noise_case}),
which relate packet transmission failure indicators to the norm of
the state and its first-moment, are not tight for multi-dimensional
systems. Another factor is that the disturbance in this example is
a stochastic process and does not necessarily increase the state norm
at each time. 

In Figs.~\ref{Flo:BoundFig1} and \ref{Flo:BoundFig2}, we show plots
for the first moment bound in (\ref{eq:noise_to_state_one}) with
the values of $\hat{\theta}$, $\hat{\mu}$, and $\hat{d}$ provided
in Remark~\ref{RemarkFirstHatValues}. These plots show that the
bound becomes larger for more powerful attacks. Notice that in Fig.~\ref{Flo:BoundFig1}
the bounds are not visibly decreasing. This is because $\hat{\theta}\in(0,1)$
is close to $1$ and the term $\hat{\mu}\hat{\theta}^{t}\|x_{0}\|_{2}$
is smaller than $\hat{d}\overline{w}$. 
\begin{figure}[t]
\centering  \includegraphics[width=0.9\columnwidth]{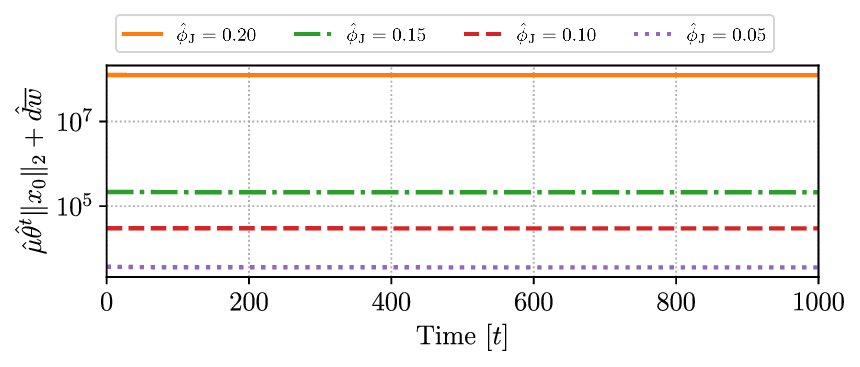} 

\caption{First moment bound in (\ref{eq:noise_to_state_one}) with respect
time for different values of the attacker's average interference power
$\hat{\phi}_{\mathrm{J}}$ and $\hat{\kappa}=1$.}
 \label{Flo:BoundFig1} 
\end{figure}

\begin{figure}[t]
\centering  \includegraphics[width=0.9\columnwidth]{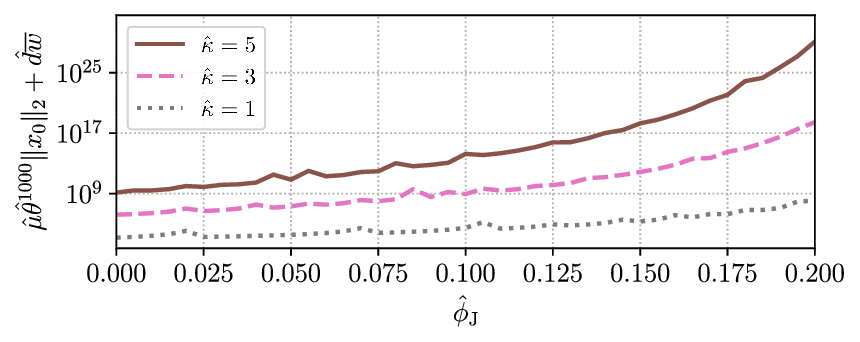} 

\caption{First moment bound in (\ref{eq:noise_to_state_one}) evaluated at
time $t=1000$ with respect to the attacker's average interference
power $\hat{\phi}_{\mathrm{J}}$ for different values of $\hat{\kappa}$.}
 \label{Flo:BoundFig2} 
\end{figure}

\begin{figure}[t]
\centering  \includegraphics[width=0.9\columnwidth]{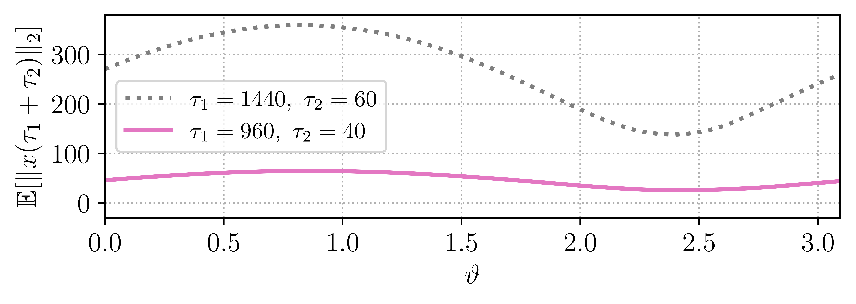} 

\caption{Performance index $\mathbb{E}[\|x(\tau_{1}+\tau_{2})\|_{2}]$ for
different values of $\vartheta$ in (\ref{eq:disturbancevartheta})
and different settings of $\tau_{1},\tau_{2}$ in (\ref{eq:ExampleStrategyV}).}
 \label{Flo:wrtTheta} 
\end{figure}

Next, we explore the effects of different disturbance realizations.
In particular, we run simulations for different values of $\vartheta\in[0,\pi]$
in (\ref{eq:disturbancevartheta}) and evaluate $\mathbb{E}[\|x(\tau_{1}+\tau_{2})\|_{2}]$
as a performance index. The evaluations are done under the attack
strategy (\ref{eq:ExampleStrategyV}). This strategy was also used
for obtaining Fig.~\ref{Flo:Example1} for the particular value $\vartheta=\pi/2$
corresponding to the situation where the disturbance only affects
the second state. Different values of $\vartheta$ result in different
levels of disturbance on the first and the second states. Fig.~\ref{Flo:wrtTheta}
shows that $\mathbb{E}[\|x(\tau_{1}+\tau_{2})\|_{2}]$ can vary largely
depending on $\vartheta$ even though the disturbance magnitude $\|w(t)\|=|\widetilde{w}(t)|$
does not depend on how $\vartheta$ is chosen. Here the value of $\mathbb{E}[\|x(\tau_{1}+\tau_{2})\|_{2}]$
is $\pi$-periodic, because the distribution of disturbance as a function
of $\vartheta$ is $\pi$-periodic. Variations in $\mathbb{E}[\|x(\tau_{1}+\tau_{2})\|_{2}]$
indicate that jamming attacks can be more/less effective depending
on how the disturbance enters in the dynamics.

\begin{figure}[t]
\centering  \includegraphics[width=0.9\columnwidth]{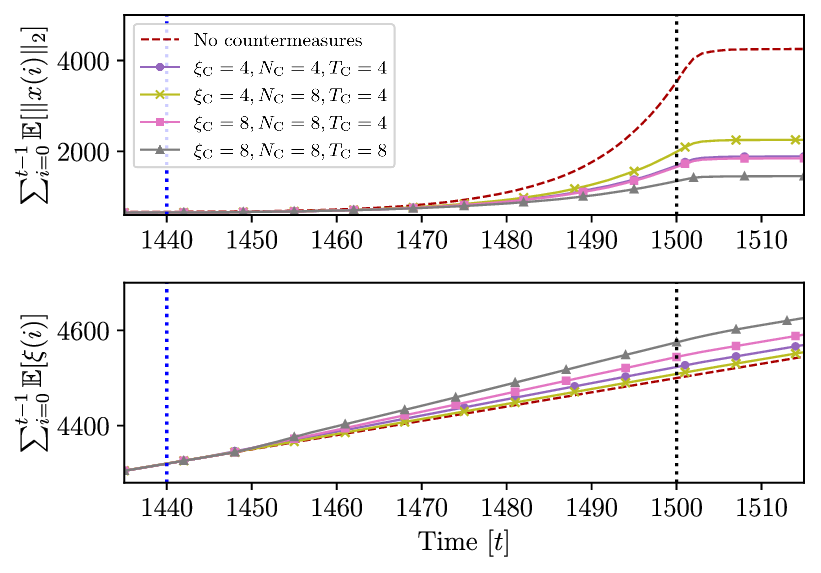} 

\caption{Approximate expected values of total state norm (top) and total transmission
power (bottom) for different countermeasure parameters compared to
the case with no countermeasures from Fig.~\ref{Flo:Example1}.}
 \label{Flo:Countermeasures} 
\end{figure}

\subsubsection{Countermeasures against jamming}

The damaging effects of the jamming attacks can be reduced by adjusting
the transmission power ($\xi$ in (\ref{eq:p-outage})). Consider
the setup where successful communications are replied with acknowledgement
messages. In the case where communication fails, the controller would
receive no acknowledgement, which indicates the failure. In this setup,
the controller can improve the overall performance by increasing the
transmission power when there are many consecutive failures. This
countermeasure against jamming can be described as follows. If $l(t-i)=1$
for each $i\in\{1,\ldots,N_{\mathrm{C}}\}$ (representing $N_{\mathrm{C}}$
total consecutive failures), then at time $t$, the transmission power
$\xi(t)$ is set to a value $\xi_{\mathrm{C}}$ (larger than the nominal
value $3$ used above) for a duration of $T_{\mathrm{C}}$ time steps.
Thus, at those time steps, failures become less likely. After $T_{\mathrm{C}}$
time steps, the transmission power is set back to its nominal (lower)
value and the countermeasure system restarts counting consecutive
failures. We explore the effectiveness of this countermeasure against
the attacks that are illustrated in the bottom part of Fig.~\ref{Flo:Example1}.
Specifically, Fig.~\ref{Flo:Countermeasures} shows the expected
total state norm ($\sum_{i=0}^{t-1}\mathbb{E}[\|x(i)\|_{2}]$) and
expected total transmission power ($\sum_{i=0}^{t-1}\mathbb{E}[\xi(i)]$)
approximated through 500 simulations for different parameter values
$\xi_{\mathrm{C}}\in\{4,8\},N_{\mathrm{C}}\in\{4,8\},T_{\mathrm{C}}\in\{4,8\}$.
The results indicate that the effects of jamming can be mitigated
by temporarily increasing transmission powers, and the performance
gets better with larger total transmission power use.

\section{Conclusion}

\label{sec:Conclusion} 

We explored the networked control problem under jamming attacks with
time-varying interference power. Specifically, we investigated the
effects of jamming attacks on systems that are subject to disturbance,
and obtained conditions under which the first moment of the state
stays bounded. Our results indicate that if the disturbance is known
to be bounded, stability of a system can be guaranteed under larger
average jamming interference powers. 

Our results can be extended for the case where multiple wireless channels
are used for the transmission of state and control data. In such cases,
increasing the number of channels through which the plant and the
controller communicate can increase the level of tolerance against
certain jamming attack scenarios.

One of our future research directions is to increase robustness properties
of the overall system by utilizing predictive control approaches proposed
previously in \cite{feng2017resilient}. Another future work is to
provide an analysis of the networked control system under time-varying
transmission powers. In this line of research, for a wireless networked
control problem without attacks, \cite{varma2017stochastic} recently
explored stability and energy-efficiency under time-varying transmission
powers. 

\bibliographystyle{ieeetr}
\bibliography{references}

\appendix 

Here we provide an analysis of moment stability of the networked control
system under Assumption~\ref{FirstAssumption}. In particular, the
following result provides a condition under which the first-moment
of the state ($\mathbb{E}[\|x(t)\|_{2}]$) of system (\ref{eq:closed-loop-system})
converges to zero at a geometric rate. 

\begin{approp} \label{PropositionMomentConvergence}Consider the
closed-loop networked control system (\ref{eq:closed-loop-system})
for the case where $w(t)=0$, $t\in\mathbb{N}_{0}$. Suppose that
the attacker's interference power process $\{\phi_{\mathrm{J}}(t)\in[0,\infty)\}_{t\in\mathbb{N}_{0}}$
satisfies Assumption~\ref{FirstAssumption}. Moreover, assume
\begin{align}
 & (1-\hat{p}(\overline{\phi}_{\mathrm{J}}))\|A+BK\|+\hat{p}(\overline{\phi}_{\mathrm{J}})\|A\|<1.\label{eq:abkcond}
\end{align}
 Then the closed-loop system (\ref{eq:closed-loop-system}) is first-moment
geometrically stable, that is, there exist $\overline{\mu}\geq0$
and $\overline{\theta}\in(0,1)$ such that 
\begin{align}
\mathbb{E}[\|x(t)\|_{2}] & \leq\overline{\mu}\overline{\theta}^{t}\|x_{0}\|_{2},\quad t\in\mathbb{N}.\label{eq:moment-convergence}
\end{align}
 \end{approp} \medskip

The proof of Proposition~\ref{PropositionMomentConvergence} is based
on the following result. 

\begin{aplemma} \label{LemmaDTwo} Suppose that the attacker's interference
power process  $\{\phi_{\mathrm{J}}(t)\in[0,\infty)\}_{t\in\mathbb{N}_{0}}$
satisfies Assumption~\ref{FirstAssumption}. Then for every $\alpha_{1}\geq0$,
$\alpha_{0}\geq0$ that satisfy
\begin{align}
\alpha_{1}\hat{p}(\overline{\phi}_{\mathrm{J}})+\alpha_{0} & <1,\label{eq:alpha_ineq-1}
\end{align}
there exist scalars $\mu\geq0$ and $\theta\in(0,1)$ such that 
\begin{align}
\mathbb{E}[\prod_{i=0}^{t-1}(\alpha_{1}l(i)+\alpha_{0})] & \leq\mu\theta^{t},\quad t\in\mathbb{N}.\label{eq:geometric-lemma-result-1}
\end{align}
 \end{aplemma}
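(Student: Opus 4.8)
The plan is to mirror the argument used for Lemma~\ref{LemmaDOne}, specializing the product window to $[0,t)$ and invoking Assumption~\ref{FirstAssumption} in place of Assumption~\ref{SecondAssumption}. First I would dispose of the degenerate case $\alpha_{1}+\alpha_{0}=0$, for which the product is identically zero and any $\mu\ge0$, $\theta\in(0,1)$ satisfies (\ref{eq:geometric-lemma-result-1}). For $\alpha_{1}+\alpha_{0}>0$, I would apply Lemma~2.1 of \cite{ahmetifacwc2017} to rewrite $\mathbb{E}[\prod_{i=0}^{t-1}(\alpha_{1}l(i)+\alpha_{0})]$ as $\mathbb{E}[\prod_{i=0}^{t-1}(\alpha_{1}p(v(i))+\alpha_{0})]$, trading the Bernoulli indicators for the failure-probability function evaluated at the interference powers. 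Since $\alpha_{1}\ge0$ and $p\le\hat{p}$ by (\ref{eq:p_phat_ineq}), this expectation is bounded above by $\mathbb{E}[\prod_{i=0}^{t-1}h(v(i))]$, where $h(v)\triangleq\alpha_{1}\hat{p}(v)+\alpha_{0}$ is nondecreasing, concave, and continuous because $\hat{p}$ has these properties and $\alpha_{1}\ge0$.

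The central step is the pointwise inequality
\[
\prod_{i=0}^{t-1}h(v(i))\le h^{t}\big(\tfrac{1}{t}\textstyle\sum_{i=0}^{t-1}v(i)\big).
\]
As in the proof of Lemma~\ref{LemmaDOne}, I would establish it by observing that $\ln h$ is concave, being the composition of the nondecreasing concave map $\ln(\cdot)$ with the concave map $h$ (cf.\ \cite{avriel2010generalized,IFACboyd2004convex}), so that the arithmetic mean of the $\ln h(v(i))$ does not exceed $\ln h$ evaluated at the arithmetic mean of the $v(i)$; exponentiating yields the claim, with the case $h(v(i))=0$ for some $i$ being immediate. This is exactly the point at which Assumption~\ref{FirstAssumption} enters: it bounds $\tfrac{1}{t}\sum_{i=0}^{t-1}v(i)\le\tfrac{\overline{\kappa}}{t}+\overline{v}$ almost surely, and since $h$ is nondecreasing we get $\prod_{i=0}^{t-1}h(v(i))\le h^{t}(\tfrac{\overline{\kappa}}{t}+\overline{v})$ almost surely, hence $\mathbb{E}[\prod_{i=0}^{t-1}h(v(i))]\le h^{t}(\tfrac{\overline{\kappa}}{t}+\overline{v})$. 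Note that, unlike in Lemma~\ref{LemmaDOne}, only the single window $[0,t)$ is ever needed, so the weaker single-sum bound of Assumption~\ref{FirstAssumption} suffices.

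Finally I would convert this into geometric decay. Condition (\ref{eq:alpha_ineq-1}) gives $h(\overline{v})<1$, so by continuity there is $\delta>0$ with $h(\delta+\overline{v})<1$, and consequently a positive integer $T^{*}$ with $h(\tfrac{\overline{\kappa}}{T^{*}}+\overline{v})<1$; set $\theta\triangleq h(\tfrac{\overline{\kappa}}{T^{*}}+\overline{v})$. For $t\ge T^{*}$ the monotonicity of $h$ yields $\mathbb{E}[\prod_{i=0}^{t-1}h(v(i))]\le\theta^{t}$, whereas for $1\le t<T^{*}$ the crude bound $h(v)\le\alpha_{1}+\alpha_{0}$ gives $\mathbb{E}[\prod_{i=0}^{t-1}h(v(i))]\le(\alpha_{1}+\alpha_{0})^{t}\le(\alpha_{1}+\alpha_{0})^{T^{*}-1}$. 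Choosing $\mu\triangleq(\alpha_{1}+\alpha_{0})^{T^{*}-1}\theta^{-(T^{*}-1)}$ reconciles the two regimes (and $\mu\ge1$ since $\theta\le\alpha_{1}+\alpha_{0}$), which establishes (\ref{eq:geometric-lemma-result-1}). I do not anticipate a genuine obstacle: because the product is anchored at time $0$, the single-window form of Assumption~\ref{FirstAssumption} is precisely what the concavity/Jensen step consumes, and the only bookkeeping subtlety is the finitely many small indices $t<T^{*}$, which are absorbed into the constant $\mu$.
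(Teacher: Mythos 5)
Your proposal is correct and matches the paper's own proof essentially step for step: the same reduction via Lemma~2.1 of \cite{ahmetifacwc2017} and the bound $p\leq\hat{p}$, the same concavity/Jensen argument for $\ln h$, the same use of Assumption~\ref{FirstAssumption} to bound the time-averaged interference power, and the identical choices $\theta=h(\frac{\overline{\kappa}}{T^{*}}+\overline{v})$ and $\mu=(\alpha_{1}+\alpha_{0})^{T^{*}-1}\theta^{-(T^{*}-1)}$. Your observation that the single-window form of Assumption~\ref{FirstAssumption} suffices because the product is anchored at time $0$ is exactly the point exploited by the paper.
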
 

\begin{proof} The proof is similar to that of Lemma~\ref{LemmaDOne}.
In particular, we have (\ref{eq:expxhineq-1-1}) and (\ref{eq:powerbound})
with $t_{1}=0$, $t_{2}=t$, $h(v)\triangleq\alpha_{1}\hat{p}(v)+\alpha_{0}$,
and hence, $\mathbb{E}[\prod_{i=0}^{t-1}(\alpha_{1}l(i)+\alpha_{0})]\leq\mathbb{E}[h^{t}(\frac{1}{t}\sum_{i=0}^{t-1}\phi_{\mathrm{J}}(i))]$.
By Assumption~\ref{FirstAssumption}, we then obtain $\mathbb{E}[h^{t}(\frac{1}{t}\sum_{i=0}^{t-1}\phi_{\mathrm{J}}(i))]\leq h^{t}(\frac{\overline{\kappa}}{t}+\overline{\phi}_{\mathrm{J}})$.
Therefore, by (\ref{eq:alpha_ineq-1}), after letting $T^{*}$ be
a positive integer such that $h(\frac{\overline{\kappa}}{T^{*}}+\overline{\phi}_{\mathrm{J}})<1$
and defining
\begin{align}
\theta\triangleq h(\frac{\overline{\kappa}}{T^{*}}+\overline{\phi}_{\mathrm{J}}),\quad & \mu\triangleq(\alpha_{1}+\alpha_{0})^{T^{*}-1}\theta^{-(T^{*}-1)},\label{eq:thetamuappendix}
\end{align}
we obtain (\ref{eq:geometric-lemma-result-1}). \end{proof}

\emph{Proof of Proposition \ref{PropositionMomentConvergence}}: By
(\ref{eq:two-norm-ineq}) with $w(t)=0$, $t\in\mathbb{N}_{0}$, we
have
\begin{align}
\mathbb{E}[\|x(t)\|_{2}] & \leq\frac{c_{2}}{c_{1}}\mathbb{E}\Big[\prod_{i=0}^{t-1}(\zeta_{1}l(i)+\zeta_{0})\Big]\|x_{0}\|_{2}.\label{eq:two-norm-ineq-1-1}
\end{align}
 Next, we apply Lemma~\ref{LemmaDTwo}. First, since $\|A\|>1$,
(\ref{eq:abkcond}) implies $\|A+BK\|\in[0,1)$, and thus, $\zeta_{1}>0$,
$\zeta_{0}\in[0,1)$. With $\alpha_{1}=\zeta_{1}$ and $\alpha_{0}=\zeta_{0}$,
(\ref{eq:abkcond}) implies (\ref{eq:alpha_ineq-1}). Therefore, by
Lemma~\ref{LemmaDTwo}, we have $\mathbb{E}\big[\prod_{i=0}^{t-1}(\zeta_{1}l(i)+\zeta_{0})\big]\leq\mu\theta^{t}$,
where $\mu\geq0$ and $\theta\in(0,1)$. Hence, by (\ref{eq:two-norm-ineq-1-1}),
the inequality (\ref{eq:moment-convergence}) holds with $\overline{\mu}\triangleq\frac{c_{2}}{c_{1}}\mu$
and $\overline{\theta}=\theta$. \hfill $\blacksquare$

Proposition~\ref{PropositionMomentConvergence} provides a method
to check the first-moment geometric stability of the system (\ref{eq:system}),
(\ref{eq:control-input}) under jamming attacks that satisfy Assumption~\ref{FirstAssumption}.
The scalars $\|A+BK\|$ and $\|A\|$ in condition (\ref{eq:abkcond})
respectively represent the behavior of the closed-loop dynamics under
successful transmissions and the open-loop dynamics under failed transmissions.
Here, we select the matrix norm $\|\cdot\|$ to ensure $\|A+BK\|<1$.
This is possible since the feedback gain $K$ is designed to make
$A+BK$ a Schur matrix, for which such a matrix norm can be constructed
(see Corollary 9.3.4 of \cite{hornmatrixanalysis}). On the other
hand, for unstable open-loop dynamics, we have $\|A\|>1$. Notice
that the inequality in (\ref{eq:abkcond}) holds if the upper bound
$\overline{\phi}_{\mathrm{J}}$ of the average jamming interference
power is sufficiently small so that $\hat{p}(\overline{\phi}_{\mathrm{J}})$
is sufficiently close to zero. In such cases, transmission failures
happen sufficiently rarely in average, and thus the overall networked
control system frequently follows the stable behavior of the closed-loop
dynamics and the geometric convergence of the first-moment of the
state as in (\ref{eq:moment-convergence}) can be guaranteed. As we
establish in the proof, the scalar $\overline{\theta}$ in (\ref{eq:moment-convergence})
represents the rate of convergence, and it depends on $\hat{p}(\overline{\phi}_{\mathrm{J}})$
as well as the scalars $\|A+BK\|$ and $\|A\|$. In particular, if
the bound $\overline{\phi}_{\mathrm{J}}$ on the long run average
jamming interference power is small, then $\overline{\theta}$ is
also small, indicating faster convergence of the first-moment.

\emph{First-moment geometric stability} discussed in Proposition~\ref{PropositionMomentConvergence}
is a stronger notion of stochastic stability in comparison to \emph{almost-sure
asymptotic stability} explored in \cite{ahmetifacwc2017}. As expected,
first-moment geometric stability condition (\ref{eq:moment-convergence})
is more restrictive with respect to the attack parameter $\overline{\phi}_{\mathrm{J}}$.
Specifically, the almost-sure asymptotic stability condition presented
in Theorem~3.5 of \cite{ahmetifacwc2017} reduces to 
\begin{align}
(1-\hat{p}(\overline{\phi}_{\mathrm{J}}))\ln\|A+BK\|+\hat{p}(\overline{\phi}_{\mathrm{J}})\ln\|A\| & <0,\label{eq:almost-sure}
\end{align}
with $\left\Vert \cdot\right\Vert $ denoting the matrix norm induced
by the vector norm $\|x\|_{P}\triangleq\sqrt{x^{\mathrm{T}}Px}$ where
$P\in\mathbb{R}^{n\times n}$ is a positive-definite matrix. For this
matrix norm, (\ref{eq:abkcond}) implies (\ref{eq:almost-sure}).

\end{document}